\documentclass[10pt,twocolumn,twoside]{IEEEtran}

\usepackage{multicol, multirow}
\usepackage{array}
\usepackage{amsmath, amsthm, amsfonts, amssymb}

\usepackage{cite}
\usepackage{graphicx}
\usepackage{subfigure}
\usepackage{color}
\usepackage{cases}
\usepackage[]{footmisc}
\usepackage{calc}
\usepackage{physics}
\usepackage{booktabs}
\usepackage{placeins}
\usepackage{xcolor}
\usepackage{hhline}
\usepackage{kotex}
\usepackage{soul}
\usepackage{dblfloatfix}

\newcommand{\Conv}{%
  \mathop{\scalebox{1.5}{\raisebox{-0.2ex}{$\circledast$}}
  }
}

\ifCLASSOPTIONonecolumn
   \interdisplaylinepenalty=2500  
   \renewcommand{\baselinestretch}{2.0}  
   \newcommand{\figwidth}{0.6\columnwidth}
   
   \newcommand{\widefigwidth}{0.5\columnwidth} 
 \else
   \newcommand{\figwidth}{0.9\columnwidth}  
   
   \newcommand{\widefigwidth}{0.95\columnwidth}   
\fi

\newcommand{\ignoreIt}[1] {}

\allowdisplaybreaks 
\newtheorem{theorem}{Theorem}
\newtheorem{corollary}{Corollary}
\newtheorem{lemma}[theorem]{Lemma}
\newtheorem{prop}{Proposition}
\newtheorem{remark}{Remark}

\DeclareMathOperator*{\argmax}{argmax}

\begin{document}
\author{Jinyoung Lee, Duc Trung Dinh, Hyeonsik Yeom, Si-Hyeon Lee, and Jeongseok Ha}


\date{\empty}

\title{Multi-User Cooperation for Covert Communication Under Quasi-Static Fading}
\ifCLASSOPTIONonecolumn
  \renewcommand{\baselinestretch}{1.5}  
\fi

\maketitle

\begin{abstract}
This work studies a covert communication scheme for an uplink multi-user scenario in which some users are opportunistically selected to help a covert user. In particular, the selected users emit interfering signals via an orthogonal resource dedicated to the covert user together with signals for their own communications using orthogonal resources allocated to the selected users, which helps the covert user hide the presence of the covert communication. For the covert communication scheme, we carry out extensive analysis and find system parameters in closed forms. The analytic derivation for the system parameters allow one to find the optimal combination of system parameters by performing a simple one-dimensional search. In addition, the analytic results elucidate relations among the system parameters. In particular, it will be proved that the optimal strategy for the non-covert users is an on-off scheme with equal transmit power.  The theoretical results derived in this work are confirmed by comparing them with numerical results obtained with exhaustive searches. Finally, we demonstrate that the results of work can be utilized in versatile ways by demonstrating a design of covert communication with energy efficiency into account.
\end{abstract}    

\begin{IEEEkeywords}
Covert communication, cooperative users, covert rate, on-off power profile.
\end{IEEEkeywords}

\ifCLASSOPTIONonecolumn
  \clearpage
  \pagenumbering{arabic}
\fi

\IEEEpeerreviewmaketitle
\section{Introduction} 
\IEEEPARstart{I}{n} the era of the Internet of Things, wireless communication is at the risk of adversarial eavesdropping with the extremely large amount of personal data in the open wireless networks. To protect wireless privacy, many studies on physical layer security (PLS) have been extensively conducted during the past decades \cite{Bloch11Physical-Layer}, which takes advantage of physical layer resources to resolve security issues such as eavesdropping, key sharing, and low probability of detection (LPD) communication, a.k.a, \emph{covert} communication. Covert communication aims to hide the existence of communication behaviors from a warden, i.e., guaranteeing a very low probability of being detected by the warden, while attaining a certain data rate, called \emph{covert rate}, at a target receiver. Since the early 20th century, spread spectrum techniques \cite{Simon94Spread} have been developed to realize the covert communication for military purposes. While there had been many progresses of spread spectrum techniques, the fundamental limit on the covert rate was still unanswered. Recently, a great deal of efforts \cite{Bash13Limits, Che13Reliable, Bloch16Covert, Wang16Fundamental, Arumugam16Keyless,Tan19Time,Cho21Treating, Lee15Achieving,Shahzad17Covert, He18Covert, Wang19Covert, Sobers17Covert, Shahzad18Achieving, Hu19Covert, Li20Optimal, Zheng21Wireless} have been paid to answering the question. 

In \cite{Bash13Limits}, the authors uncover the square-root law of covert rate which assumes covert communication on additive Gaussian white noise (AWGN) channels. In particular, the square-root law states that at most $\mathcal O (\sqrt n)$ bits can be delivered reliably and covertly to the intended receiver in $n$ channel uses. However, the square-root law unfortunately implies a pessimistic conclusion that the achievable covert rate per channel use,  $\mathcal O (\sqrt n/n)$ approaches 0 as $n \rightarrow \infty$. Later, the information-theoretic limit of covert communication has been further investigated for various channels such as binary symmetric channels \cite{Che13Reliable}, discrete memoryless channels \cite{Bloch16Covert,Wang16Fundamental}, multiple access channels \cite{Arumugam16Keyless}, broadcast channels \cite{Tan19Time}, and interference channels \cite{Cho21Treating}. The studies in \cite{Che13Reliable, Bloch16Covert, Wang16Fundamental, Arumugam16Keyless, Tan19Time, Cho21Treating} again reached the same pessimistic square-root law, which serves as a stimulus for subsequent researches.

Since the fundamental studies on the limit of covert communication \cite{Bash13Limits, Che13Reliable, Bloch16Covert, Wang16Fundamental, Arumugam16Keyless,Tan19Time, Cho21Treating}, the covert communication has been further investigated with various practical considerations such as uncertainty of channel parameters \cite{Lee15Achieving, Shahzad17Covert}, jammers \cite{Sobers17Covert, He18Covert, Li20Optimal, Zheng21Wireless}, relays \cite{Wang19Covert}, full-duplex receiver \cite{Shahzad18Achieving, Hu19Covert}. The authors in \cite{Lee15Achieving} demonstrate that a positive covert rate is achievable on AWGN channels when the warden has uncertainty about the variance of the background noise. They find that there exists a certain signal-to-noise ratio (SNR) called \emph{SNR wall} below which a positive covert rate is achievable. Meanwhile, Shahzad \emph{et. al} in \cite{Shahzad17Covert} also investigate covert communication in the presence of channel uncertainty. In particular, it is assumed that the warden has noisy channel estimations for the channels between the transmitter and receivers. The authors show that the uncertainty of channel information can be leveraged to increase the covert rate. In addition, Wang \emph{et. al} in \cite{Wang19Covert} investigate covert communication in relay networks where the warden suffers from channel estimation errors. For the setup, the detection error probability (DEP) of the warden and covert rate are analyzed, which shows that the transmitter can send $\mathcal O(n)$ bits reliably and covertly in $n$ channel uses.

For breaking the square-root law, Sobers \emph{et. al} in \cite{Sobers17Covert} use a single jammer which deliberately induces uncertainty  to the warden. They prove that $\mathcal O(n)$ bits can be delivered reliably and covertly to the intended receiver in $n$ channel uses when the channel between the jammer and the warden is either an AWGN channel or a block fading channel. Furthermore, He \emph{et. al} \cite{He18Covert} consider a covert communication scenario with multiple jammers and analyze the covert rate with the aid of stochastic geometry. In \cite{Shahzad18Achieving, Hu19Covert}, a full-duplex receiver is employed and plays the role of jammer while receiving covert messages. In particular, the work in \cite{Shahzad18Achieving} assumes that the full channel state information (full CSI) of the channel between the transmitter and the warden is revealed to the warden. The authors in \cite{Shahzad18Achieving} derive a closed-form expression of minimum DEP and jointly optimize the distributions of the jamming power and the transmission probability. Meanwhile, Hu \emph{et. al} in \cite{Hu19Covert} consider the case that the warden is aware of only the distribution of the channel between the transmitter and the warden and find the optimal transmit power for maximizing covert rate with the channel inversion power control (CIPC) and truncated CIPC. The work in \cite{Li20Optimal} prove that the optimality of CIPC when the transmitter knows the full CSI of the channel between the transmitter and the receiver, and the warden a priori knows only the type of channel between the jammer and the warden, i.e., either an AWGN channel or a Rayleigh channel. However, the works in \cite{Hu19Covert, Li20Optimal} find key design parameters via numerical evaluations, which makes it difficult to discover relations among the parameters. In addition, the models in \cite{Hu19Covert, Li20Optimal} are limited to the single jammer case. Later, Zheng \emph{et. al} \cite{Zheng21Wireless} consider a cooperative jamming scheme which selects jammers based on the magnitudes of channel gains between the jammers and the receiver. In particular, jammers are selected when their channel-gain-magnitudes are less than a certain activation threshold in order to minimize the interference to the receiver. In the cooperative jamming scheme, the activation threshold is designed to maximize covert rate. However, most of the results in \cite{Zheng21Wireless} are obtained via numerical evaluations, which fails to provide insights between the design parameters and the covert rate. In addition, the cooperative jammers are assumed to have an equal transmit power whose optimality however is not discussed. 

This work considers the uplink transmission of a multi-user system where all users access a base station (BS) called, Bob through orthogonal resources, e.g., frequency bands, time slots, and spreading sequences. Among the users, a covert user, called Alice transmits her covert messages with a certain probability, which is overheard by a warden, called Willie. To hide the transmission from Alice, we propose a scheme in which some selected users, called cooperative users, transmit interference signals through the orthogonal resource allocated to Alice while they transmit their own signal through their designated orthogonal resources. The user selection is conducted by utilizing the multi-user diversity in a careful way so that the interference at Bob is minimized while the detection capability of covert communication at Willie is most hampered. Although this work looks somewhat similar to the existing one \cite{Zheng21Wireless}, the unique contributions of this work are summarized as follows: 
\begin{enumerate}[\topsep = 0pt]
  \item This work proposes a covert communication scheme for an uplink multi-user scenario in which some users are opportunistically selected to help a covert user, Alice, while transmitting their own messages. 
    
  \item We carry out extensive analysis for the proposed system, which allows us to find the optimal power profile of the interference signals from the selected users called \emph{cooperative} users. The analysis in this work reveals that an on-off strategy with equal transmit power is optimum, and thus, the joint optimization of the user selection and power profile turns out to be a simple on-off scheme. 
    
\item The analysis also enables us to find the optimal detection threshold of Willie in a closed-form which minimizes DEP. The optimal detection threshold in turn allows us to derive the minimum number of cooperative users to meet a target DEP in an explicit form.

\item The system consists of multiple inter-related parameters of which optimization seems complicated. However, it is shown that the transmit power of the covert user, i.e., Alice, can be obtained by performing a simple one-dimensional search. The transmit power of Alice from the search allows us to analytically obtain all the other system parameters. 
\end{enumerate}

The rest of this paper is organized as follows. In Section \ref{Sec:System}, the system model and the problem addressed in this work will be introduced. In Section \ref{Sec:PowerProfile}, we will investigate the optimal power profile for cooperative users and reformulate the optimization problem. The optimal detection threshold, the minimum number of cooperative users, and the activation threshold will be derived in closed-form expressions in Section \ref{Sec:DEP_Willie}. Based on the results in Section \ref{Sec:DEP_Willie}, we will derive the parameters governing the throughput of covert communication, i.e., the connection probability and the maximum covert rate in Section \ref{Sec:Opt_CT}. The theoretical results of this work will be evaluated and compared with numerical results in Section \ref{Sec:numerical} where some insights from the comparisons will also be discussed. Finally, in Section \ref{Sec:Conclusions}, we will conclude this work.

\section{System Model and Problem Formulation} \label{Sec:System}

\subsection{System Model} 
We consider the uplink of a multi-user system, in which $M$ non-covert users and a covert user access a BS, i.e., Bob, through dedicated orthogonal resources. This work assumes the frequency division multiple access (FDMA) without loss of generality. In Fig. \ref{fig:System Model}, the covert user and each of $M$ non-covert users are denoted by Alice and $U_m$ for $m = 1, 2, \ldots, M$, respectively, and some of the non-covert users are selected as cooperative users. We will simply refer the collection of covert and non-covert users to as `users' when there is no need to distinguish them. It is assumed that all the entities in Fig. \ref{fig:System Model} are equipped with single antennas, and each user transmits its message, $W_m$ over a dedicated frequency band, $f_m$ while Alice transmits her message $W_a$ over the frequency band, $f_a$ with a certain probability. 

The channels between the users and Bob are estimated by performing a two-way channel estimation in which Bob first broadcasts a pilot signal over all the frequency bands, i.e., $f_a$ and $f_m$ for $1 \le m \le M$. Then, all the users including the covert user, i.e., Alice, receive the pilot signal which is transmitted over the corresponding frequency bands. Based on the received pilot signals, user $m$ and Alice estimate the channel gains $h^m_{b, m}$ and $h^a_{b, a}$, respectively, where the symbols, $m$, $a$, and $b$ in the subscript and superscript indicate the user index, Alice, and Bob, respectively. The symbol in the superscript tells the allocated frequency band. The pair of symbols in the subscript indicates the nodes for which the channel gain is obtained with the pilot signal transmitted from the node designated by the first symbol to the one designated by the second symbol. That is, $h^m_{b, m}$ is the channel gain between user $m$ and Bob with the pilot signal transmitted from Bob to user $m$ over the frequency band for user $m$, i.e., $f_m$. This work assumes that the frequency bands $f_a$ and $f_m$ for $m = 1, 2, \ldots, M$ are separated larger than the coherent bandwidth so that the channel gains in $\mathcal{H}_d = \{h^a_{b, a}, h^m_{b, m}: 1 \le m \le M\}$ are statistically independent.

Then, in the second phase of the two-way channel estimation, each non-covert user, $U_m$ transmits a \emph{secret} orthogonal pilot signal \cite{Bash13Limits, Zheng21Wireless, Xu19Pilot} back to Bob over dual frequency bands $f_a$ and $f_m$ while Alice sends her secret pilot signal back to Bob over her dedicated frequency band $f_a$. The secret pilot signals are a priori shared between the users and Bob, and thus enable Bob to estimate the channel gains $h^a_{m, b}$ and $h^a_{a, b}$ for $m = 1, 2, \ldots, M$. This work assumes that the transmissions of uplink and downlink pilot signals are performed in the same coherent time, and thus the channel reciprocity holds, i.e., $h^x_{y, z} = h^x_{z, y}$ for $h^x_{y, z} \in \mathcal H_u \cup \mathcal H_d$ where $\mathcal H_u = \{h^a_{a, b}, h^a_{m, b}, h^m_{m, b}: 1 \le m \le M\}$. It is also assumed that the channel estimations are conducted without estimation error, and the channel gains in $\mathcal H_d \cup \mathcal H_u$ have the same statistical properties. Note that since the downlink pilot signals are not secured, the attacker, i.e., Willie, can listen to the downlink pilot signals, which allows Willie to estimate the channels $h^m_{b, w}$ and $h^a_{b, w}$. However, the transmissions of secret uplink pilot signals make Willie ignorant of the channel gains. Thus, Willie has the knowledge of only the statistical properties for $h^a_{m, w}$ and $h^a_{a, w}$. The work assumes quasi-static Rayleigh fading channels, i.e., the channel gains remain constants within one codeword and change independently for another codeword. Thus, the channel gains follow circularly symmetric complex Gaussian distributions with zero mean and variances $\lambda_b$ and $\lambda_w$, respectively, i.e., $h^m_{m, b}$, $h^a_{a, b}$ and $h^a_{m, b} \sim \mathcal{CN}(0,\lambda_b)$, and $h^a_{a, w}$, $h^m_{m, w} $ and $h^a_{m, w} \sim \mathcal{CN}(0,\lambda_w)$, where the variances $\lambda_b$, and $\lambda_w$, are assumed to be unity, i.e., $\lambda_b = \lambda_w = 1$, which implies that Bob and Willie are located at the same distance from the users.

\begin{figure}[t]
\centering
\includegraphics[width=\widefigwidth]{./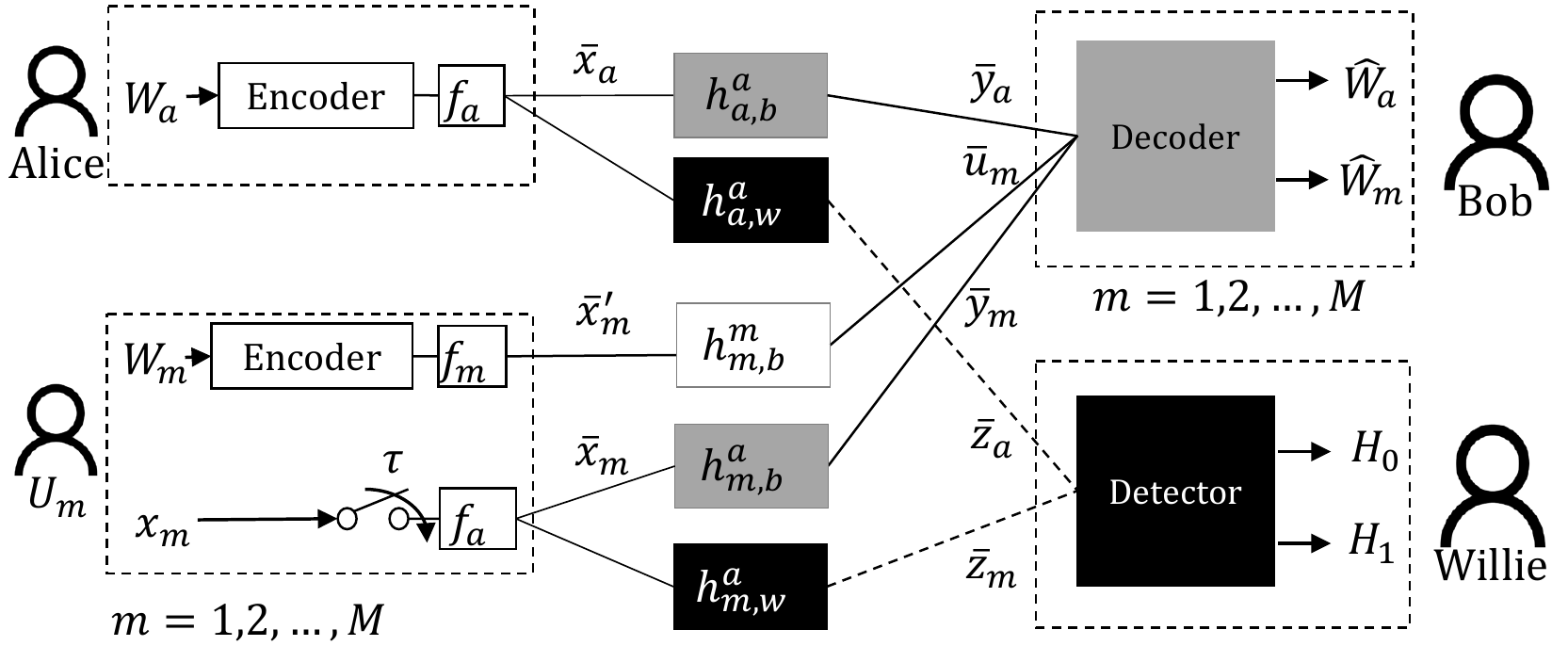}
\caption{System model of multi-user cooperation for covert communication.}
\label{fig:System Model} 
\end{figure}

After performing the channel estimations, each non-covert user transmits its signal, $\bar x'_m = \{x'_m[t]: 1 \le t \le N\}$ in Fig. \ref{fig:System Model}, for the message $W_m$ over the dedicated frequency band, $f_m$ where $N$ is the codeword length. In the meantime, each cooperative user emits its interference signal, $\bar x_m = \{x_m[t]: 1 \le t \le N\}$ in Fig. \ref{fig:System Model}, over the frequency band $f_a$ together with the signal for its own message, i.e., $\bar x'_m$ over $f_m$. As mentioned earlier, the interference signals help hide the transmission of the covert message but induce undesirable interference in the reception of the covert message at Bob. Thus, the selection of the cooperative users must be conducted in a careful way, which will be introduced shortly. Meanwhile, the covert user, Alice transmits her signal, $\bar x_a = \{x_a[t]: 1 \le t \le N\}$ in Fig. \ref{fig:System Model}, for the message $W_a$ with a certain probability.  Since we are interested only in the transmission of the covert message, we focus only on the received signals for the mixture of the transmitted signals from the covert and cooperative users over the frequency band $f_a$.

Through the frequency band $f_a$, the received signal at Bob is expressed as
\begin{align} \label{eq:y_Bob}
    y[t]= 
  \begin{cases}
    \displaystyle\sum_{m = 1}^{M} \sqrt{P_m} h^a_{m, b} x_m[t] + n_b[t], & H_0 \\
    \displaystyle\sqrt{P_a}h^a_{a, b} x_a[t] + \sum_{m = 1}^{M} \sqrt{P_m}h^a_{m, b}x_m[t] + n_b[t], & H_1
  \end{cases},
\end{align}
where $x_a[t]$ and $x_m[t]$ are the transmitted signals from Alice and user $m$, respectively and follow the zero-mean Gaussian distribution with the unit variance, i.e., $\mathbb E[|x_a[t]|^2]= \mathbb E[|x_m[t]|^2]=1$ for $t = 1, 2, \ldots, N$, and $P_a$ and $P_m$ denote the transmit power of Alice and the interference signal power from user $m$, respectively. Note that cooperative users can be identified as the ones with nonzero interference power, i.e., $P_m > 0$. In \eqref{eq:y_Bob}, the signal, $n_b[t]$ represents the AWGN at Bob with variance $\sigma_b^2$, i.e., $n_b[t] \sim \mathcal{CN}(0,\sigma_b^2)$, and $H_0$ and $H_1$ denote the hypotheses that Alice transmits her covert message or not, respectively. Meanwhile, the received signal at Willie is given by
\begin{equation} \label{eq:y_Willie}
    z[t]= 
  \begin{cases}
      \displaystyle\sum_{m = 1}^{M} \sqrt{P_m}h^a_{m, w}x_m[t] + n_w[t], & H_0 \\
      \sqrt{P_a}h^a_{a, w}x_a[t] \\ ~~~~+ \displaystyle\sum_{m = 1}^{M}\sqrt{P_m}h^a_{m, w}x_m[t] + n_w[t], & H_1
  \end{cases},
\end{equation} 
where $n_w[t]$ is the AWGN at Willie with variance $\sigma_w^2$, i.e., $n_w[t] \sim \mathcal{CN}(0,\sigma_w^2)$.

\subsection{Detection Error at Willie}\label{SubSec:metric_Willie}
Based on the observation at Willie in \eqref{eq:y_Willie}, he attempts to determine whether Alice transmits her covert message or not. The ultimate goal of Willie is to find an optimal strategy to detect the covert communication between Alice and Bob based on his received signal. The detection metric of Willie is defined as the error probability as follows:
\[
  P_e = \Pr(H_0)P_{\text FA} + \Pr(H_1)P_{\text MD},
\]
where $\Pr(H_1)$ and $\Pr(H_0)$ are the probabilities for Alice to transmit her covert message and to be silent, respectively, $P_{\text FA} = \Pr(D_1 | H_0)$ is the false alarm probability, $D_1$ indicates the event for Willie to decide the presence of  covert communication, and $P_{\text MD} = \Pr(D_0|H_1)$ is the miss-detection probability, and $D_0$ indicates the event for Willie to decide no presence of covert communication. According to \cite{Sobers17Covert, Shahzad18Achieving, Hu19Covert}, $P_e \geq \min\{\Pr(H_0),\Pr(H_1)\} (P_{\text FA} + P_{\text MD})$. Hence, we define the covert constraint as $P_{\text FA} + P_{\text MD} \geq 1-\epsilon$ for arbitrary $\epsilon > 0$. For simplicity, we denote the DEP by $\zeta$, i.e., $\zeta = P_{\text FA} + P_{\text MD}$. Note that a blind test with no side information satisfies $P_{\text FA} + P_{\text MD} = 1$. This work assumes that Willie employs an energy detector \cite{Zheng21Wireless, Hu19Covert, Li20Optimal} to detect the covert communication since it was shown in \cite{Shahzad21Covert} that the energy detector is asymptotically optimal in the large DEP regime.


The test statistic for the energy detector is given by
\begin{equation} \label{Eq:TestStat}
  T_w = \frac{1}{N} \sum_{t=1}^{N} |z[t]|^2. 
\end{equation}
Using the test statistic, the decision is made with the following criterion: 
\begin{equation} \label{eq:test}
   T_w \mathop{\lessgtr}_{D_1}^{D_0} \gamma,
\end{equation}
where $\gamma$ is a predetermined detection threshold. Considering that $x_m[t]$ and $n_w[t]$ are uncorrelated Gaussian random variables, the test statistic in \eqref{Eq:TestStat} can be expressed as 
\[
  T_w= 
  \begin{cases}
    \displaystyle\left(\sum_{m=1}^{M} P_m|h^a_{m,w}|^2 + \sigma_w^2 \right) \cdot \frac{\chi^2_{2N}}{2N}, & H_0 \\
    \displaystyle\left(\sum_{m=1}^{M} P_m|h^a_{m,w}|^2 +P_a|h^a_{a,w}|^2 + \sigma_w^2\right) \cdot \frac{\chi^2_{2N}}{2N}, & H_1
  \end{cases},
\]
where $\chi^2_N$ is a chi-square random variable with $N$ degrees of freedom, and Willie knows only the probability distributions of the channel gains, i.e., $h^a_{m,w}$ and $h^a_{a,w}$. It is assumed that the length of codeword is sufficiently long, i.e., $N \gg 1$, and Willie can collect a large number of samples. In this setup, due to the strong law of large numbers, $\chi^2_{2N}/2N$ converges to 1 when $N \rightarrow \infty$, and thus, the test statistic can be reformulated as
\begin{align}\label{eq:test_statistic}
   T_w= 
  \begin{cases}
    \displaystyle\sum_{m=1}^{M} P_m|h^a_{m,w}|^2 + \sigma_w^2, & H_0 \\
    \displaystyle\sum_{m=1}^{M} P_m|h^a_{m,w}|^2 +P_a|h^a_{a,w}|^2 + \sigma_w^2 , & H_1
  \end{cases}.
\end{align}
Then, DEP, $\zeta$, is given by
\begin{equation} \label{eq:DEP}
  \zeta  = \Pr(T_w > \gamma ; H_0) + \Pr(T_w \leq \gamma ; H_1).  
\end{equation}
Note that the detection error probability, $\zeta$ is a function of the detection threshold $\gamma$, and we define the minimum of $\zeta$ as 
\[
  \zeta_{\rm min} = \min_\gamma \zeta.
\]

\subsection{Problem Formulation}\label{SubSec:Problem}

Since the quasi-static fading is considered in this work, an outage probability, denoted by $P_o$, can be defined as 
\begin{equation}\label{eq:connection}
P_o = \Pr\left(\log_2\left(1+\frac{P_a|h^a_{a,b}|^2}{\sum_{m=1}^M  P_m |h^a_{m,b}|^2 + \sigma_b^2}\right) < R\right),
\end{equation}
where $R$ is a target covert rate. The outage probability enables us to define the probability of the complementary event, i.e., the connection probability, as $P_c = 1 - P_o$. As a performance measure, we consider the throughput of the covert communication which is given by 
\begin{equation}
  \eta = RP_c.
\end{equation}
The throughput has been widely adopted as a performance measure in the existing works \cite{Zheng21Wireless, Hu19Covert, Li20Optimal} on covert communication over quasi-static fading channels. The design of covert communication system introduced in this work can be formulated into an optimization problem as follows:
\begin{subequations}\label{eq:original_opt}
\begin{align}
  &(R^*, P_a^*, \bar{P}^*) = \argmax_{R, P_a, \bar P} ~ \eta\label{eq:Opt_obj} \\
  & ~~~~ \text{s.t.} ~~ \zeta_{\rm min} \geq 1- \epsilon \label{eq:Opt_covertness}\\
  & ~~~~~~~~~ 0 \leq P_a \leq P_{\max}\\
  & ~~~~~~~~~ 0\leq P_m \leq P_{\max}, \label{eq:Opt_power}
\end{align}
\end{subequations}
where $\bar P \triangleq (P_1,P_2,\ldots,P_M) $ is the transmit power profile for users, and $P_{\max}$ is the maximum transmit power of user $m$ for $1 \le m \le M$. Note that cooperative users are the ones with non-zero interference power, i.e., $U_m$ for $m \in \{n : P_n > 0, 1 \le n \le M\}$.

\section{Optimal Power Profile for Cooperative Users}\label{Sec:PowerProfile}
For solving the optimization problem in \eqref{eq:original_opt}, we first find the optimal power profile for given $R$ and $P_a$.
\begin{prop}\label{prop:power_profile}
The on-off scheme with the maximum transmit power is the optimal power allocation of the users for given $R$ and $P_a$. That is,
\[
       P_m^*= 
  \begin{cases}
      P_{\max}, & |h^a_{m,b}|^2 \leq \tau \\
      0 , & {\text otherwise}
  \end{cases}.
\]
The activation threshold, $\tau$ is decided to make the inequality, $\zeta_{\min} \ge 1 - \epsilon$ hold while minimizing $\Omega = \sum_m P^2_m$. 
\end{prop}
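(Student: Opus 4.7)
My plan is to decompose the claim into two subclaims: (i) at the optimum, every $P_m$ lies in $\{0, P_{\max}\}$, and (ii) the indices at which $P_m = P_{\max}$ are precisely those with $|h^a_{m,b}|^2$ below some threshold $\tau$. Subclaim (ii) follows easily from (i): the throughput $\eta = R P_c$ depends on $\bar P$ only through the aggregate interference $I_b = \sum_m P_m |h^a_{m,b}|^2$ at Bob, so for any fixed number of active users (each at $P_{\max}$), choosing those with the smallest $|h^a_{m,b}|^2$ minimizes $I_b$, which is exactly the threshold rule $|h^a_{m,b}|^2 \leq \tau$.

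For subclaim (i), the starting point is that since $\{|h^a_{m,w}|^2\}$ are i.i.d.\ exponential and independent of the design variables, $\zeta_{\min}$ is a permutation-invariant functional of $\bar P$. I would analyze Willie's detection problem by obtaining the density of $\sum_m P_m |h^a_{m,w}|^2$, optimizing the decision threshold $\gamma$ out of $\zeta$, and then arguing that the resulting $\zeta_{\min}$ is monotone in the quadratic summary $\Omega = \sum_m P_m^2$ alone, so that the covertness constraint collapses to $\Omega \geq \Omega_{\min}$ for some threshold depending on $P_a$ and $\epsilon$. A two-coordinate exchange then completes the argument: given any interior component $P_k \in (0, P_{\max})$ and another coordinate $P_j$, one can perturb $(P_k, P_j)$ while holding $\Omega$ fixed so that one of them moves to $\{0, P_{\max}\}$ without increasing $I_b$, and iterating drives every component of an optimum into $\{0, P_{\max}\}$. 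Each active coordinate then contributes $P_{\max}^2$ to $\Omega$, so minimizing $\Omega$ subject to $\Omega \geq \Omega_{\min}$ coincides with minimizing the number of active users, consistent with how $\tau$ is characterized in the proposition statement.

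The main obstacle is the reduction of $\zeta_{\min}$ to a condition on $\Omega$ alone. Because $\sum_m P_m |h^a_{m,w}|^2$ is a weighted mixture of exponentials whose full distribution depends on all of $\bar P$, not merely on $\Omega$, this step either relies on a specific feature of the closed-form $\zeta_{\min}$ obtained after $\gamma$-optimization in the forthcoming Section~\ref{Sec:DEP_Willie}, or invokes a Gaussian surrogate for moderate to large $M$, in which only the mean $\sum_m P_m$ and the variance $\Omega$ determine Willie's best achievable DEP. Once that reduction is secured, combining it with the two-coordinate exchange and the sum-minimization argument of (ii) yields the on-off threshold structure asserted in the proposition.
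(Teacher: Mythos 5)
Your proposal is correct and shares the paper's central reduction, but it finishes the argument by a genuinely different route. The crux you flag as ``the main obstacle'' --- that $\sum_m P_m|h^a_{m,w}|^2$ depends on all of $\bar P$, not just on $\Omega$ --- is resolved in the paper exactly as you anticipate: a Lyapunov/Gaussian surrogate makes Willie's statistic depend only on the mean $\omega=\sum_m P_m$ and variance $\Omega=\sum_m P_m^2$, and once the detection threshold is parameterized by a fixed false-alarm rate $\alpha$ (i.e.\ $\gamma=\sqrt{\Omega}\,Q^{-1}(\alpha)+\omega$), the mean cancels and $\zeta_{\min}$ becomes a monotonically increasing function of $\Omega$ alone, so the covert constraint collapses to $\Omega\ge\delta$. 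Where you diverge is the combinatorial step: the paper further relaxes $\Omega\ge\delta$ to a \emph{linear} constraint $\sum_m P_m\ge\sqrt{M\delta}$ via Cauchy--Schwarz and then runs a KKT analysis of the resulting linear program to get the threshold structure, whereas you keep the quadratic constraint and use a two-coordinate exchange along circles $P_k^2+P_j^2=\mathrm{const}$ (on which the linear objective $\sum_m P_m|h^a_{m,b}|^2$ is minimized only at an arc endpoint). Your route is arguably cleaner: the paper's claim that the Cauchy--Schwarz step ``gets tight for large $M$'' is questionable precisely for on-off profiles with $K\ll M$ active users, and your exchange argument sidesteps that relaxation entirely. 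Two small caveats: first, your exchange cannot drive \emph{every} coordinate to $\{0,P_{\max}\}$ --- it leaves at most one fractional coordinate, which is exactly the paper's boundary case $|h^a_{m,b}|^2=\nu^*$ that it likewise rounds to $P_{\max}$ on feasibility grounds, so you should state that explicitly rather than claim all components land on the boundary; second, your subclaim (ii) should be phrased with care, since the active set of smallest gains is optimal only after (i) has pinned each active power to the common value $P_{\max}$, which your decomposition does supply.
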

\begin{proof}
The proof is provided in Appendix \ref{Appendix:power_profile}.
\end{proof}

\begin{remark}\label{remark:Determination_K}
Bob broadcasts $\tau$ which is determined by $h^a_{m,b}$ after the two-way channel estimation. The details of deciding $\tau$ will be shortly given in Corollary \ref{corollary:tau}. Note that $\Omega = \sum_m P^2_m$ can be expressed as $K P^2_{\max}$ where $K$ is the number of users whose channel gains satisfy the inequality, $|h^a_{m,b}|^2 \leq \tau$. The parameter $\tau$ is decided to make the inequality, $\zeta_{\min} \ge 1 - \epsilon$ hold while minimizing $\Omega$, which is equivalent to minimizing $K$ while satisfying the covert constraint, $\zeta_{\min} \ge 1 - \epsilon$. It can be noticed that Bob selects the $K$ smallest channels based on the knowledge of $h^a_{m,b}$. However, Willie does not know who are selected to emit the interference signals among the users despite of knowing $\tau$ since $\tau$ is determined by only $h^a_{m,b}$'s which are however unknown to Willie.
\end{remark}

This work finds the optimal power profile of cooperative users while the authors in \cite{Zheng21Wireless} just assume a given power for the cooperative jammers. In \cite{Li20Optimal,Hu19Covert}, the power of single jammer is also assumed to be fixed to a certain value. Thus, this work finds for the first time the power profile of multiple cooperative users for covert communication. Based on the optimal power profile, we can reformulate the original optimization problem (\ref{eq:original_opt}) to 
\begin{subequations}\label{eq:reformul_opt}
\begin{align}
  &\left(R^*, P_a^*, \tau^* \right)=\argmax_{R, P_a, \tau} ~ \eta  \\
  & ~~~~ \text{s.t.} ~~ \zeta_{\rm min} \geq 1- \epsilon \label{eq:CvtConst}\\
  & ~~~~~~~~~ 0 \leq P_a \leq P_{\max} \label{eq:PwrConst}.
  \end{align}
\end{subequations}
From Appendix \ref{Appendix:power_profile}, it can be noticed that $\zeta_{\min}$ is a monotonically increasing function of the square sum of the interference power, i.e., $\zeta_{\min} = f(KP_{\max}^2)$ where $f(\cdot)$ is a monotonically increasing function of its argument. We will discuss how to find $K$ and $\tau$ in the next section.

\section{Detection Performance for Willie} \label{Sec:DEP_Willie}
In this section, we first derive the detection error probability, $\zeta$ in \eqref{eq:DEP}. Then, we find out the detection threshold $\gamma^\ast$ at which the minimum detection error probability, $\zeta_{\min}$ is obtained. The number of cooperative users, $K$, and the activation threshold, $\tau$ are also derived in closed forms.

The test statistic in \eqref{eq:test_statistic} can be expressed as
\begin{align}\label{eq:test_statistic2}
   T_w= 
  \begin{cases}
    \displaystyle\sum_{i=1}^{K} P_{\max}|h^a_{m_i, w}|^2 + \sigma_w^2, & H_0 \\
    \displaystyle\sum_{i=1}^{K} P_{\max}|h^a_{m_i, w}|^2 +P_a|h^a_{a,w}|^2 + \sigma_w^2 , & H_1
  \end{cases},
\end{align}
where $m_i$'s for $1 \le i \le K$ are the indices of the selected users, i.e., the cooperative users in Proposition \ref{prop:power_profile}. Then, the test statistic $T_w$ in \eqref{eq:test_statistic2} follows the following probability distributions depending on the hypothesis:
\begin{align}\label{eq:test_statistic_RV}
  T_w - \sigma_w^2  \sim
   \begin{cases}
      \Gamma(K,P_{\max}), & H_0, \\
      \Gamma(K,P_{\max}) \circledast  \Gamma(1,P_a), & H_1,
  \end{cases}
\end{align}
where $\Gamma(a,b)$ is the Gamma distribution with shape parameter $a$ and scale parameter $b$, and $\circledast$ indicates the convolution between the two Gamma distributions. Note that the sum of squares of channel-gain magnitudes follows the Gamma distribution. Based on the distributions of the test statistic in \eqref{eq:test_statistic_RV}, we can have a closed-form expression of the detection error probability in Lemma \ref{lemma:DEP}.
\begin{lemma}\label{lemma:DEP}
  The detection error probability, $\zeta$, is given by  
\begin{align}\label{eq:exact_DEP}
      \zeta &= 1 - \exp\left(K\frac{P_{\max}^2+2P_aP_{\max}}{2P_a^2} - \frac{\bar{\gamma}}{P_a}\right) \nonumber\\
    & \hspace{0.2\columnwidth} \times Q\left(\sqrt{K}\frac{P_{\max}+P_a}{P_a} - \frac{1}{P_{\max}\sqrt{K}}\bar{\gamma}\right),
\end{align}
where $\bar\gamma \triangleq \gamma - \sigma_w^2$.
\end{lemma}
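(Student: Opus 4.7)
The plan is to start from the definition $\zeta = \Pr(T_w > \gamma; H_0) + \Pr(T_w \le \gamma; H_1)$ and reduce it to a single Laplace-type integral against the density of the interference power. Setting $S = T_w - \sigma_w^2$ under $H_0$, \eqref{eq:test_statistic_RV} gives $S \sim \Gamma(K,P_{\max})$, and under $H_1$ I can write $T_w - \sigma_w^2 = S + Y$ with $Y \sim \Gamma(1,P_a)$ independent of $S$. With $\bar\gamma = \gamma - \sigma_w^2$ this yields $\zeta = \Pr(S > \bar\gamma) + \Pr(S + Y \le \bar\gamma)$.

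My first move is a conditioning identity. Using $\Pr(Y \le y) = 1 - e^{-y/P_a}$ and integrating out $S$,
\begin{equation*}
\Pr(S+Y \le \bar\gamma) = F_S(\bar\gamma) - e^{-\bar\gamma/P_a} \int_0^{\bar\gamma} f_S(s)\, e^{s/P_a}\, ds,
\end{equation*}
which, added to $\Pr(S > \bar\gamma) = 1 - F_S(\bar\gamma)$, cancels the two $F_S$ contributions and leaves
\begin{equation*}
\zeta = 1 - e^{-\bar\gamma/P_a} \int_0^{\bar\gamma} f_S(s)\, e^{s/P_a}\, ds.
\end{equation*}
The problem is thus reduced to evaluating a single moment-generating-function-type integral against the Gamma density.

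For the evaluation I would invoke the CLT on the $K$ i.i.d.\ exponential summands that define $S$, treating $f_S$ as approximately $\mathcal N(\mu,\sigma^2)$ with $\mu = KP_{\max}$ and $\sigma^2 = KP_{\max}^2$. Completing the square in the exponent $-(s-\mu)^2/(2\sigma^2) + s/P_a$ recenters the Gaussian at $\mu + \sigma^2/P_a$ and pulls out the constant factor $\exp\!\bigl(\mu/P_a + \sigma^2/(2P_a^2)\bigr)$. Substituting the values of $\mu$ and $\sigma^2$ turns this prefactor into $\exp\!\bigl(K(P_{\max}^2 + 2P_aP_{\max})/(2P_a^2)\bigr)$, which, combined with the external $e^{-\bar\gamma/P_a}$, reproduces the exponential in \eqref{eq:exact_DEP}.

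What remains is a standard normal integral with upper limit $(\bar\gamma-\mu)/\sigma - \sigma/P_a$ and lower limit $-\mu/\sigma - \sigma/P_a = -\sqrt K\,(1+P_{\max}/P_a)$, which tends to $-\infty$ as $K$ grows; extending it to $-\infty$ introduces only $O\!\bigl(e^{-K(1+P_{\max}/P_a)^2/2}\bigr)$ error. Using $\Phi(x) = Q(-x)$ and substituting $\mu = KP_{\max}$, $\sigma = \sqrt K\,P_{\max}$ turns the resulting $Q$-function argument into $\sqrt K (P_{\max}+P_a)/P_a - \bar\gamma/(\sqrt K\,P_{\max})$, matching \eqref{eq:exact_DEP} exactly. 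The main obstacle is justifying the Gaussian approximation and the lower-limit extension; both become sharp in the $K\gg 1$ regime that the covert-constraint analysis implicitly relies on, so the approximation can be presented as asymptotic equality in the same spirit as the earlier $\chi^2_{2N}/(2N)\to 1$ step used in \eqref{eq:test_statistic}.
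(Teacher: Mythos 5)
Your proposal is correct and is essentially the paper's own argument: both replace the $\Gamma(K,P_{\max})$ interference term by its Gaussian CLT approximation, keep Alice's $\Gamma(1,P_a)$ contribution exact as an exponential, and evaluate the resulting convolution by completing the square in the exponent. The only difference is bookkeeping --- you cancel the interference CDF between the two error terms \emph{before} approximating, reducing everything to one moment-generating-function integral, whereas the paper computes $P_{\rm FA}$ and $P_{\rm MD}$ separately and lets $Q(x)+Q(-x)=1$ perform the same cancellation; your explicit treatment of the lower-limit extension error is a minor gain in rigor but does not change the route.
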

\begin{proof}
The proof is provided in Appendix \ref{Appendix:DEP}.
\end{proof}
The detection error probability derived in this work enables us to find out some of the system parameters in closed forms. In particular, utilizing the detection error probability in \eqref{eq:exact_DEP}, we will derive the optimal detection threshold in a closed form, $\gamma^\ast$ at which the detection error probability in \eqref{eq:exact_DEP} is minimized, i.e.,
\[
  \gamma^* = \arg \min_{\gamma} \zeta.
\]
The optimal detection threshold $\gamma^\ast$ is given by Lemma \ref{lemma:gamma}.
\begin{lemma} \label{lemma:gamma}
  The optimal detection threshold at Willie is given by 
  \begin{equation}\label{eq:Opt_gamma}
     \gamma^* = KP_{\max}+\sigma_w^2.
   \end{equation} 
\end{lemma}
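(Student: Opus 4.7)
The plan is to treat the detection error probability $\zeta$ from Lemma \ref{lemma:DEP} as a smooth function of $\bar\gamma = \gamma-\sigma_w^2$ and locate its minimizer by first-order calculus. Writing $\zeta = 1 - h(\bar\gamma)$ with
\[
  h(\bar\gamma) = \exp\!\left(K\frac{P_{\max}^2+2P_aP_{\max}}{2P_a^2}-\frac{\bar\gamma}{P_a}\right) Q\!\left(\sqrt{K}\frac{P_{\max}+P_a}{P_a}-\frac{\bar\gamma}{\sqrt{K}P_{\max}}\right),
\]
minimizing $\zeta$ is equivalent to maximizing $h$. Since $\zeta \to 1$ as $\gamma \to 0^+$ and as $\gamma \to \infty$ (equivalently $h \to 0$ at both boundaries) and $h$ is smooth and nonnegative in between, the minimizer of $\zeta$ is an interior stationary point of $h$ obtained from $h'(\bar\gamma) = 0$.

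I would then differentiate $h$ using $Q'(x) = -\phi(x)$, where $\phi$ denotes the standard Gaussian PDF. The $-1/P_a$ factor from the derivative of the exponential prefactor and the $-1/(\sqrt{K}P_{\max})$ factor brought down by the chain rule on the argument of $Q$ are the only $\bar\gamma$-dependent contributions, so after collecting terms and dividing through by the common positive exponential, the stationarity condition collapses to the compact form
\[
  \frac{\phi(x^\ast)}{Q(x^\ast)} = \frac{\sqrt{K}P_{\max}}{P_a}, \qquad x^\ast \triangleq \sqrt{K}\frac{P_{\max}+P_a}{P_a}-\frac{\bar\gamma^\ast}{\sqrt{K}P_{\max}}.
\]

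Next I would invoke the standard large-argument asymptotic for the inverse Mills ratio, $\phi(x)/Q(x) \sim x$ as $x \to \infty$. This asymptotic is appropriate in the covert regime of interest because the covert constraint \eqref{eq:CvtConst} forces $K$ to be large enough that $\sqrt{K}P_{\max}/P_a$ is large. Substituting $x^\ast = \sqrt{K}P_{\max}/P_a$ into the defining relation above, the $\sqrt{K}P_{\max}/P_a$ pieces on either side cancel and the remainder simplifies to $\sqrt{K} = \bar\gamma^\ast/(\sqrt{K}P_{\max})$, giving $\bar\gamma^\ast = KP_{\max}$ and hence $\gamma^\ast = KP_{\max} + \sigma_w^2$ as claimed.

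Finally, I would note that strict monotonicity of $\phi(x)/Q(x)$ in $x$ makes the interior critical point unique, and the boundary behavior $\zeta \to 1$ certifies it as the global minimum. The main obstacle is justifying the use of $\phi(x)/Q(x) = x$ as an equality: carrying the next-order $1/x$ correction in the Mills-ratio expansion gives a residual shift in $\bar\gamma^\ast$ of order $P_a$, which is negligible against the dominant $KP_{\max}$ term in the large-$K$ covert regime, so $\gamma^\ast = KP_{\max} + \sigma_w^2$ is the correct leading-order answer; the numerical validation in Section \ref{Sec:numerical} would provide additional reassurance.
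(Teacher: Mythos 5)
Your proof is correct and lands on the same answer, but it runs the approximation in the opposite order from the paper. The paper first substitutes $Q(x)\approx e^{-x^{2}/2}$ into \eqref{eq:exact_DEP} (justified by $\sqrt{K}(P_{\max}+P_a)\gg P_a$), which collapses the exponent into the quadratic $-\bar\gamma^{2}/(2KP_{\max}^{2})+\bar\gamma/P_{\max}-K/2$ and reads off the vertex $\bar\gamma^{\ast}=KP_{\max}$; you instead differentiate the exact expression, obtain the clean stationarity condition $\phi(x^{\ast})/Q(x^{\ast})=\sqrt{K}P_{\max}/P_a$, and only then invoke the Mills-ratio asymptotic $\phi(x)/Q(x)\sim x$. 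The two shortcuts are equivalent to leading order --- dropping the $1/(\sqrt{2\pi}\,x)$ prefactor in $Q$ is exactly what suppresses the $1/x$ correction in the Mills ratio --- so neither argument is more accurate, but yours buys two things the paper's does not: an explicit bound on the error (the next-order term shifts $\bar\gamma^{\ast}$ by only $O(P_a)$, confirming $\gamma^{\ast}=KP_{\max}+\sigma_w^{2}$ is the correct leading-order answer), and a uniqueness/global-optimality argument via the strict monotonicity of $\phi(x)/Q(x)$ together with $\zeta\to 1$ at the boundaries, which the paper leaves implicit. One small imprecision: the natural left boundary is $\gamma=\sigma_w^{2}$ (where $P_{\text{FA}}=1$, $P_{\text{MD}}=0$, so $\zeta=1$ exactly), not $\gamma\to 0^{+}$, though this does not affect the argument.
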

\begin{proof}
From the detection error probability in \eqref{eq:exact_DEP}, we have 
\begin{align*}
  \zeta = 1 - & \exp\left(K\frac{P_{\max}^2 + 2P_aP_{\max}}{2P_a^2} - \frac{\bar{\gamma}}{P_a}\right) \nonumber\\
     & \hspace{0.2\columnwidth} \times Q\left(\sqrt{K}\frac{P_{\max}+P_a}{P_a} - \frac{1}{P_{\max}\sqrt{K}}\bar{\gamma}\right) \\     
    & \mathop{\approx}_{(a)} 1 - \exp(-\frac{\bar{\gamma}^2}{2KP_{\max}^2}+\frac{\bar{\gamma}}{P_{\max}} - \frac{K}{2}),\\
\end{align*}
where the approximation in (a) is from $Q(x) \approx e^{-x^2/2}$, which is validated since $\sqrt{K}(P_{\max}+P_a) \gg P_a$. Then, the minimization of $\zeta$ is equivalent to maximizing the argument of the exponential function which is a quadratic function of $\gamma$. It is readily found that DEP is minimized at $\gamma^* = KP_{\max} + \sigma_w^2$.
\end{proof}

With the optimal detection threshold, $\gamma^*$ in \eqref{eq:Opt_gamma}, we can derive the minimum DEP in Theorem \ref{Theorem:Min_DEP}.
\begin{theorem}\label{Theorem:Min_DEP}
 The minimum detection error probability at Willie, $\zeta_{\min}$ is given by
  \begin{equation}\label{eq:Opt_DEP}
    \zeta_{\min} = 1-\frac{1}{\sqrt{\pi}\left(\sqrt{\frac{KP_{\max}^2}{2P_a^2}} + \sqrt{\frac{KP_{\max}^2}{2P_a^2} + \frac{4}{\pi}} \right)}.
  \end{equation}
\end{theorem}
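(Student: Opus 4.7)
The plan is to plug the optimal detection threshold from Lemma \ref{lemma:gamma} into the closed-form DEP from Lemma \ref{lemma:DEP} and simplify. Setting $\bar{\gamma} = \bar{\gamma}^{*} = KP_{\max}$ in \eqref{eq:exact_DEP}, the argument of the exponential becomes
\begin{equation*}
K\frac{P_{\max}^2 + 2P_a P_{\max}}{2P_a^2} - \frac{KP_{\max}}{P_a} = \frac{KP_{\max}^2}{2P_a^2},
\end{equation*}
since the linear-in-$P_{\max}$ term cancels. At the same time, the argument of the $Q$-function reduces to
\begin{equation*}
\sqrt{K}\,\frac{P_{\max} + P_a}{P_a} - \frac{KP_{\max}}{P_{\max}\sqrt{K}} = \sqrt{K}\,\frac{P_{\max}}{P_a}.
\end{equation*}
So if we introduce the shorthand $u \triangleq \sqrt{K}P_{\max}/P_a$, Lemma \ref{lemma:DEP} collapses to the compact form $\zeta_{\min} = 1 - e^{u^2/2} Q(u)$.

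The remaining task is to evaluate $e^{u^2/2} Q(u)$ in closed form. Note that the crude bound $Q(x)\approx e^{-x^2/2}$ used in the proof of Lemma \ref{lemma:gamma} is too loose here, as it would give the trivial $\zeta_{\min} \approx 0$; we instead need a sharp approximation that preserves the pre-exponential factor. I would invoke the Börjesson–Sundberg-type tight bound
\begin{equation*}
Q(x) \;\approx\; \frac{2\,e^{-x^2/2}}{\sqrt{2\pi}\bigl(x + \sqrt{x^2 + 8/\pi}\bigr)},
\end{equation*}
which is known to be tight over the full range of $x\ge 0$ (in particular it is exact at $x=0$, giving $Q(0)=1/2$). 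Applying this to $Q(u)$ and multiplying by $e^{u^2/2}$ cancels the Gaussian factor and yields $e^{u^2/2}Q(u)\approx \sqrt{2/\pi}/(u+\sqrt{u^2+8/\pi})$.

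Finally, I would rewrite the denominator in the form asserted by the theorem. Using $u/\sqrt{2} = \sqrt{KP_{\max}^2/(2P_a^2)}$ and $\sqrt{u^2+8/\pi}/\sqrt{2} = \sqrt{KP_{\max}^2/(2P_a^2)+4/\pi}$, the factor $1/\sqrt{2}$ pulled out of both radicals combines with $\sqrt{2/\pi}$ in the numerator to produce the claimed $1/\sqrt{\pi}$ prefactor, giving \eqref{eq:Opt_DEP} verbatim. The main obstacle is choosing the right $Q$-function approximation: the argument $u$ can be moderate rather than large, so a large-deviations style bound would not suffice, and it is the specific Börjesson–Sundberg form (which is exact at $x=0$) that matches the stated closed form exactly after rearrangement. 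Everything else is bookkeeping with the substitution $\bar\gamma^{*}=KP_{\max}$.
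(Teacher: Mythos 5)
Your proof is correct and follows essentially the same route as the paper: substitute $\bar\gamma^{*}=KP_{\max}$ into Lemma~\ref{lemma:DEP}, observe that the exponent collapses to $KP_{\max}^2/(2P_a^2)$ and the $Q$-argument to $\sqrt{K}P_{\max}/P_a$, and then apply a Mills-ratio-type approximation that is exact at zero. The B\"orjesson--Sundberg form of $Q(x)$ you invoke is algebraically identical to the $\mathrm{erfc}$ approximation the paper cites (via $Q(u)=\tfrac12\,\mathrm{erfc}(u/\sqrt{2})$), so the two derivations coincide term by term; your positive sign on the $Q$-argument is the consistent one, the paper's intermediate $Q\bigl(-\sqrt{KP_{\max}^2/P_a^2}\bigr)$ being a sign slip that its own next line silently corrects.
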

\begin{proof}
  Substituting $\gamma^*$ obtained in (\ref{eq:Opt_gamma}) into (\ref{eq:exact_DEP}), we have
  \begin{align} 
    \zeta_{\min} &= 1-\exp(\frac{KP_{\max}^2}{2P_a^2})Q\left(-\sqrt{\frac{KP_{\max}^2}{P_a^2}}\right) \nonumber\\
    &  = 1 - \frac{1}{2}\exp(X)\text{erfc}\left(\sqrt X\right) \nonumber \\
    & \mathop{\approx}_{(a)} 1 - \frac{1}{2}\exp(X)\frac{2\exp(-X)}{\sqrt{\pi}\left(\sqrt{X} + \sqrt{X + \frac{4}{\pi}} \right)} \nonumber \\
    & = 1-\frac{1}{\sqrt{\pi}\left(\sqrt{\frac{KP_{\max}^2}{2P_a^2}} + \sqrt{\frac{KP_{\max}^2}{2P_a^2} + \frac{4}{\pi}} \right)} \nonumber,
  \end{align}
where $X \triangleq K/2\cdot(P_{\max}/P_a)^2$, and the approximation in (a) is from \cite{Whittaker90ACourse}, which is validated since $KP_{\max}^2 > 2P_a^2$. 
\end{proof}
As discussed in Remark \ref{remark:Determination_K}, $K$ is set to the minimum number of cooperative users  satisfying $\zeta_{\min} \geq 1-\epsilon$. From the relation between $K$ and $\zeta_{\min}$, we have the minimum number of cooperative users denoted by $K_{\min}$ in Theorem \ref{Theorem:K}.

\begin{theorem}\label{Theorem:K}
 The minimum number of cooperative users, $K_{\min}$, to satisfy the covert constraint, $\zeta_{\min} \geq 1-\epsilon$ is given by
  \begin{equation} \label{eq:Opt_K}
    K_{\min} = \left\lceil \frac{P_a^2 c_{\epsilon}}{P_{\max}^2} \right\rceil,
  \end{equation}
  where $c_{\epsilon} = \frac{\frac{1}{\epsilon^2} - 8 + 16\epsilon^2}{2\pi}$.
\end{theorem}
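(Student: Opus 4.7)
The plan is to translate the covertness constraint $\zeta_{\min} \ge 1-\epsilon$ directly into an inequality on $K$ using the closed form for $\zeta_{\min}$ that Theorem \ref{Theorem:Min_DEP} just provided, and then invert.

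First I would observe that the expression for $\zeta_{\min}$ in \eqref{eq:Opt_DEP} is monotonically increasing in $K$, because both $\sqrt{K P_{\max}^2/(2P_a^2)}$ and $\sqrt{K P_{\max}^2/(2P_a^2) + 4/\pi}$ grow with $K$, shrinking the fraction that is subtracted from $1$. Consequently, the minimum integer $K$ satisfying the covertness constraint \eqref{eq:Opt_covertness} is obtained by solving the equality $\zeta_{\min} = 1-\epsilon$ for the real-valued $K$ and then rounding up.

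Next, I would set $Y \triangleq \sqrt{K P_{\max}^2/(2P_a^2)}$ so that the constraint becomes
\begin{equation*}
  Y + \sqrt{Y^2 + \tfrac{4}{\pi}} \;=\; \frac{1}{\sqrt{\pi}\,\epsilon}.
\end{equation*}
Isolating the remaining square root and squaring eliminates $Y^2$ on both sides, leaving a single linear equation in $Y$ whose solution is
\begin{equation*}
  Y \;=\; \frac{1 - 4\epsilon^2}{2\sqrt{\pi}\,\epsilon}.
\end{equation*}
Squaring once more and using $Y^2 = K P_{\max}^2/(2P_a^2)$ yields
\begin{equation*}
  K \;=\; \frac{P_a^2}{P_{\max}^2}\cdot\frac{1/\epsilon^2 - 8 + 16\epsilon^2}{2\pi},
\end{equation*}
which is exactly $P_a^2 c_\epsilon / P_{\max}^2$. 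Taking the ceiling to enforce integrality (and using monotonicity to ensure the covert constraint still holds) delivers \eqref{eq:Opt_K}.

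The only mild subtlety I anticipate is verifying that the squaring step introduces no extraneous root, i.e.\ that $1/(\sqrt{\pi}\epsilon) - Y \ge 0$ so the square root's sign is preserved. This reduces to $\epsilon \le 1/2$, which is automatic in any nontrivial covert-communication regime (since $P_{\rm FA}+P_{\rm MD}\le 1$ for any test, so requiring $\zeta_{\min}\ge 1-\epsilon$ only makes sense for small $\epsilon$). Beyond that, the derivation is algebraic bookkeeping; I do not expect a conceptual obstacle.
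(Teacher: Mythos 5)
Your proposal is correct and follows essentially the same route as the paper: invoke monotonicity of $\zeta_{\min}$ in $K$, solve the equality $\zeta_{\min}=1-\epsilon$ using the closed form \eqref{eq:Opt_DEP}, and take the ceiling; the paper merely compresses your explicit substitution-and-squaring steps into ``some algebraic manipulations'' and cites Remark \ref{remark:Determination_K} for the monotonicity where you derive it directly from the formula. Your extra check that no extraneous root is introduced is a small point the paper omits, but it does not change the argument.
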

\begin{proof}
We find the minimum number of cooperative users, $K_{\min}$ which satisfies the covert constraint, $\zeta_{\min} \ge 1-\epsilon$. Remark \ref{remark:Determination_K} tells that the minimum DEP, $\zeta_{\min}$ is an increasing function of $\Omega = K P_{\max}^2$. Thus, the value of $\zeta_{\min}$ satisfying the covert constraint with equality, i.e., $\zeta_{\min} = 1-\epsilon$ provides $K_{\min}$. We have a relation between $K$ and $\zeta_{\min}$ in \eqref{eq:Opt_DEP}. That is,
\begin{equation}
   \zeta_{\min} =  1-\frac{1}{\sqrt{\pi}\left(\sqrt{\frac{K_{\min} P_{\max}^2}{2P_a^2}} + \sqrt{\frac{K_{\min} P_{\max}^2}{2P_a^2} + \frac{4}{\pi}} \right)} = 1 - \epsilon.
\end{equation}
Some algebraic manipulations provide us with the minimum number of cooperative users as follows:  
\begin{equation}
    K_{\min} =  \frac{P_a^2}{P_{\max}^2}\frac{\frac{1}{\epsilon^2}-8+16\epsilon^2}{2\pi}, \label{eq:Opt_p} 
\end{equation}
which however must be an integer value, and thus we finally have
\[
   K_{\min} =  \left\lceil \frac{P_a^2}{P_{\max}^2}\frac{\frac{1}{\epsilon^2}-8+16\epsilon^2}{2\pi} \right\rceil.
\]
\end{proof}
\noindent Theorem \ref{Theorem:K} says that when $K \ge K_{\min}$, the covert constraint, $\zeta_{\min} \geq 1-\epsilon$ is satisfied. It is also noticed that the number of cooperative users is proportional to the transmit power of Alice, $P_a$. In particular, the number of minimum cooperative users quadratically grows with the increasing transmit power for covert communication. The relation between the covert constraint and the minimum number of cooperative users is also clearly manifested by the result of Theorem \ref{Theorem:K}. That is, the number of minimum cooperative users is quadratically proportional to the inverse of $\epsilon$, i.e., $K_{\min} \sim \mathcal{O}(\epsilon^{-2})$. 

For the minimum number of cooperative users, the optimal activation threshold in Proposition \ref{prop:power_profile} can be readily obtained, which is summarized in Corollary \ref{corollary:tau}.
\begin{corollary}\label{corollary:tau}
The optimal activation threshold, $\tau$ in Proposition \ref{prop:power_profile}, is determined as
\begin{equation}\label{eq:opt_tau}
  \tau = |h^a_{m_{K_{\min}}, b}|^2,
\end{equation}
where $|h^a_{m_1, b}|^2 \le |h^a_{m_2, b}|^2 \le \cdots \le |h^a_{m_M,b }|^2$.
\end{corollary}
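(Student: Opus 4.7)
My plan is to show that the corollary is a direct consequence of Proposition~\ref{prop:power_profile} together with Theorem~\ref{Theorem:K}, by tying the order statistics of $\{|h^a_{m,b}|^2\}_{m=1}^M$ to the cardinality condition $K = K_{\min}$.

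First I would recall the structural result from Proposition~\ref{prop:power_profile}: the on-off rule sets $P_m = P_{\max}$ exactly on the index set
\[
\mathcal{A}(\tau) \;\triangleq\; \{m \in \{1,\ldots,M\} : |h^a_{m,b}|^2 \le \tau\},
\]
and $\tau$ is chosen to make $\zeta_{\min} \ge 1-\epsilon$ hold while minimizing $\Omega = \sum_m P_m^2 = |\mathcal{A}(\tau)| \cdot P_{\max}^2$. Since $P_{\max}$ is fixed, minimizing $\Omega$ is equivalent to minimizing the number of cooperative users $K = |\mathcal{A}(\tau)|$.

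Next I would invoke Theorem~\ref{Theorem:K}, which establishes that any feasible $K$ (one that satisfies $\zeta_{\min} \ge 1-\epsilon$) must obey $K \ge K_{\min}$, and that $K = K_{\min}$ is achievable. Therefore the optimal $\tau$ must yield exactly $|\mathcal{A}(\tau)| = K_{\min}$ activated users. Using the order statistics $|h^a_{m_1, b}|^2 \le |h^a_{m_2, b}|^2 \le \cdots \le |h^a_{m_M, b}|^2$, the set of threshold values producing $|\mathcal{A}(\tau)| = K_{\min}$ is precisely the half-open interval $\bigl[|h^a_{m_{K_{\min}}, b}|^2,\, |h^a_{m_{K_{\min}+1}, b}|^2\bigr)$, and any $\tau$ in this interval activates the same $K_{\min}$ users, namely those with the $K_{\min}$ smallest channel magnitudes to Bob. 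Selecting the left endpoint yields the tightest (smallest) activation threshold that realizes $K = K_{\min}$, giving $\tau = |h^a_{m_{K_{\min}}, b}|^2$ as claimed.

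The main thing to watch out for is the non-uniqueness I just described: strictly speaking, any $\tau$ in the interval above is optimal in the sense of Proposition~\ref{prop:power_profile}, so I would briefly remark that the corollary reports the canonical minimum-value representative (which is also what Bob can actually read off after sorting its estimates of $\{h^a_{m,b}\}$). Almost-sure uniqueness under the continuous Rayleigh fading model (no ties among the $|h^a_{m,b}|^2$ values) should be noted in passing, so that the $K_{\min}$-th order statistic is well defined and the identification of the activated set with the $K_{\min}$ weakest-to-Bob users is unambiguous. No hard calculation is needed beyond these observations.
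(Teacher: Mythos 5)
Your proposal is correct and follows essentially the same route as the paper, whose proof is a one-sentence observation that setting $\tau$ to the $K_{\min}$-th weakest squared channel magnitude activates exactly $K_{\min}$ cooperative users; your additional remarks on the non-uniqueness of $\tau$ within the interval $\bigl[|h^a_{m_{K_{\min}},b}|^2,\,|h^a_{m_{K_{\min}+1},b}|^2\bigr)$ and the almost-sure absence of ties are valid refinements but do not change the argument.
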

\begin{proof}
  By setting $\tau$ to the square of the $K$-th weakest channel magnitude, it is clear that we have $K_{\min}$ cooperative users where $K_{\min}$ is given in \eqref{eq:Opt_K}.
\end{proof}
\noindent Note that Bob can also estimate the minimum DEP, $\zeta_{\min}$, since he knows the statistical distribution of $h^a_{m, w}$. The knowledge of $\zeta_{\min}$ enables Bob to be aware of $K_{\min}$ for a given $\epsilon$ value. Then, Bob determines the activation threshold, $\tau$ in Corollary \ref{corollary:tau} based on $h^a_{m, b}$'s acquired via the two-way channel estimation. Meanwhile, Willie does not know which $K_{\min}$ users are selected since he is ignorant of instantaneous values of $h^a_{m, b}$'s even if he listens to the broadcasted activation threshold $\tau$.


\section{Optimization of Legitimate Users' Parameters} \label{Sec:Opt_CT}
In this section, we first derive the connection probability, $P_c$ based on the results of Theorem \ref{Theorem:K}, and then find out the covert rate $R$ which maximizes the throughput $\eta$.

Alice does not know the instantaneous channel gains between the cooperative user and Bob, i.e., $h^a_{m, b}$, and thus she cannot help deciding her covert rate, $R$ based on the channel statistics. For Alice to decide her covert rate maximizing the throughput $\eta$, it is necessary to derive the connection probability in terms of the covert rate $R$. The connection probability, $P_c = 1 - P_o$ can be expressed as
\begin{align}\label{eq:re_connection}
  P_c & = \Pr\left(\log_2\left(1+\frac{P_a|h_{a,b}|^2}{\sum_{i = 1}^{K_{\min}}  P_{\max} |h_{m_i,b}|^2 + \sigma_b^2}\right) \geq R\right) \nonumber\\
  & = \Pr\left(\frac{P_a|h^a_{a,b}|^2}{\sum_{i = 1}^{K_{\min}} P_{\max} |h^a_{m_i,b}|^2 + \sigma_b^2} \geq r \right) \nonumber\\
  & = \Pr\left( \sum_{i = 1}^{K_{\min}} |h^a_{m_i,b}|^2 \leq \frac{1}{r}\frac{P_a |h^a_{a,b}|^2}{P_{\max}} - \frac{\sigma_b^2}{P_{\max}} \right) \nonumber \\
  & = \Pr\left(S_h \leq T \right) ,
\end{align}
where $m_i$'s are the indices of the cooperative users and $r=2^R - 1$. Note that $T$ in \eqref{eq:re_connection} is known to Alice since she acquires $h^a_{a, b}$ from the channel estimation. Meanwhile, $S_h$ in \eqref{eq:re_connection} is a random variable to Alice. By utilizing the order statistics of $|h^a_{m_i, b}|^2$, we can have the statistical properties of $S_h$, which leads to an expression of $P_c$ in Lemma \ref{lemma:connection_prob}.

\begin{lemma}\label{lemma:connection_prob}
  The connection probability $P_c$ can be expressed as
  \begin{equation}\label{eq:connection_prob}
    P_c = Q\left(\frac{\mu P_{\max} + \sigma_b^2 - \frac{P_a|h^a_{a,b}|^2}{r} }{\Xi P_{\max}}\right),
  \end{equation}
  where $\mu = \sum_{i=1}^{K_{\min}} \frac{K_{\min}+1-i}{M+1-i}$ and $\Xi^2 = \sum_{i=1}^{K_{\min}} \left(\frac{K_{\min}+1-i}{M+1-i}\right)^2$.
\end{lemma}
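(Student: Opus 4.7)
The plan is to characterize the distribution of $S_h = \sum_{i=1}^{K_{\min}} |h^a_{m_i, b}|^2$ through the order-statistics representation of exponentials and then invoke a Gaussian approximation to obtain $\Pr(S_h \le T)$. Since $h^a_{m, b} \sim \mathcal{CN}(0, 1)$, the squared magnitudes $|h^a_{m, b}|^2$ are i.i.d. exponentials with unit mean. Writing $Y_{(1)} \le Y_{(2)} \le \cdots \le Y_{(M)}$ for their order statistics, the $m_i$'s identified in Proposition~\ref{prop:power_profile} and Corollary~\ref{corollary:tau} pick out exactly the $K_{\min}$ smallest, so $S_h = \sum_{i=1}^{K_{\min}} Y_{(i)}$.

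Next, I would apply the R\'enyi representation $Y_{(i)} = \sum_{j=1}^{i} \tfrac{E_j}{M-j+1}$ with $E_j$ i.i.d. $\text{Exp}(1)$, and interchange the order of summation:
\[
S_h = \sum_{i=1}^{K_{\min}} \sum_{j=1}^{i} \frac{E_j}{M-j+1} = \sum_{j=1}^{K_{\min}} \frac{K_{\min}+1-j}{M+1-j}\, E_j.
\]
This exhibits $S_h$ as a weighted sum of independent standard exponentials, from which $\mathbb{E}[S_h] = \sum_{j=1}^{K_{\min}} \frac{K_{\min}+1-j}{M+1-j} = \mu$ and $\mathrm{Var}(S_h) = \sum_{j=1}^{K_{\min}} \left(\frac{K_{\min}+1-j}{M+1-j}\right)^2 = \Xi^2$ follow immediately, matching the constants in the lemma.

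Then, invoking the central limit theorem (justified for sufficiently many cooperative users, which is the operating regime required for the covert constraint of Theorem~\ref{Theorem:K} to be tight), I would approximate $S_h \stackrel{d}{\approx} \mathcal{N}(\mu, \Xi^2)$, so that
\[
\Pr(S_h \le T) \approx \Phi\!\left(\frac{T - \mu}{\Xi}\right) = Q\!\left(\frac{\mu - T}{\Xi}\right).
\]
Substituting the threshold $T = \tfrac{1}{r}\tfrac{P_a |h^a_{a,b}|^2}{P_{\max}} - \tfrac{\sigma_b^2}{P_{\max}}$ from \eqref{eq:re_connection} into this expression and multiplying numerator and denominator by $P_{\max}$ directly yields \eqref{eq:connection_prob}.

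The routine parts are the bookkeeping for $\mu$ and $\Xi^2$ via R\'enyi's representation; the one delicate step is the Gaussian approximation, which is the only reason a $Q$-function appears for what is really a weighted sum of exponentials. I would flag this as the main conceptual obstacle and note that its validity rests on $K_{\min}$ being moderately large — a regime that is naturally ensured by the quadratic growth $K_{\min} \sim \mathcal{O}(\epsilon^{-2})$ established in Theorem~\ref{Theorem:K} for stringent covertness targets. An exact closed form would otherwise require the hypoexponential density of $S_h$, which would not collapse to a $Q$-function.
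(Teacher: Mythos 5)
Your proposal is correct and follows essentially the same route as the paper's proof in Appendix~\ref{Appendix:connection_prob}: both use R\'enyi's representation of exponential order statistics, interchange the summation to express $S_h$ as a weighted sum of independent standard exponentials with weights $\Delta_j = (K_{\min}-j+1)/(M-j+1)$, and then invoke the (Lyapunov) central limit theorem to approximate $S_h$ by $\mathcal{N}(\mu,\Xi^2)$ before substituting the threshold $T$. Your explicit flagging of the Gaussian approximation as the one non-exact step, justified by $K_{\min}\sim\mathcal{O}(\epsilon^{-2})$, matches the paper's (implicit) reliance on the same regime.
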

\begin{proof}
The proof is given in Appendix \ref{Appendix:connection_prob}.
\end{proof}

Based on Lemma \ref{lemma:connection_prob}, we find out the maximum covert rate, $R_{\max}$ maximizing the throughput, $\eta = R P_c$ with the covert constraint in \eqref{eq:CvtConst} for a given $P_a$. We are interested in the large DEP regime, equivalently small $\epsilon$, and the maximum covert rate in the large DEP regime can be expressed as in Theorem \ref{theorem:Opt_R}.
\begin{theorem}\label{theorem:Opt_R}
  For a small $\epsilon$ value and a given $P_a$, the maximum covert rate, $R_{\max}$ subject to the covert constraint in \eqref{eq:CvtConst} is given by
  \begin{equation}\label{eq:Opt_R}
    R_{\max} = \log_2\left(1+\frac{\kappa}{\theta + \sqrt{-\psi}}\right),
  \end{equation}
  where $\theta = \frac{\mu P_{\max}+\sigma_b^2}{P_{\max} \Xi}$, $\kappa = \frac{P_a|h^a_{a,b}|^2}{P_{\max}\Xi}$, and $\psi = 2\log\frac{\sqrt{2\pi}}{2\theta}$. 
\end{theorem}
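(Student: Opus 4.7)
My plan is to reduce the maximization to a one-dimensional stationarity condition in $R$ by substituting the closed form of $P_c$ from Lemma \ref{lemma:connection_prob} into $\eta = R P_c$, and then invoke the small-$\epsilon$ hypothesis to simplify that condition to a form admitting an explicit solution. Setting $r = 2^R - 1$ and $y(r) = \theta - \kappa/r$, the throughput becomes $\eta(r) = \log_2(1+r)\cdot Q(y)$. Differentiating with respect to $R$, using $Q'(y) = -\phi(y)$ with $\phi$ the standard Gaussian density, and $dy/dR = \kappa(1+r)\ln 2 / r^2$, the first-order condition $d\eta/dR = 0$ yields
\[
  Q(y^*) \;=\; R^*\,\phi(y^*)\cdot\frac{\kappa(1+r^*)\ln 2}{(r^*)^2}.
\]

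Next, I would invoke Theorem \ref{Theorem:K}: for small $\epsilon$, $K_{\min}$ is large, so both $\mu$ and $\Xi$ grow and $\theta$ is large, while $\kappa$ stays of order unity. A balanced optimum must then place $y^*$ slightly below zero, so I would write $y^* = -u^*$ with $u^* > 0$ and correspondingly $r^* = \kappa/(\theta + u^*)$, which is small. Under $r^* \ll 1$ the approximations $\log_2(1+r^*) \approx r^*/\ln 2$ and $(1+r^*) \approx 1$ collapse the right-hand side to $\phi(u^*)(\theta + u^*)$, and a tail expansion of $Q(-u^*) = 1 - Q(u^*)$ handles the left-hand side. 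Matching the leading orders produces the balance
\[
  \phi(u^*)\;\approx\;\frac{1}{2\theta},\qquad\text{i.e.,}\qquad \frac{1}{\sqrt{2\pi}}\,e^{-(u^*)^2/2} \;=\; \frac{1}{2\theta},
\]
whose unique positive solution is $(u^*)^2 = 2\log(2\theta/\sqrt{2\pi}) = -\psi$. Back-substituting gives $r^* = \kappa/(\theta + \sqrt{-\psi})$, hence $R_{\max} = \log_2\!\bigl(1 + \kappa/(\theta + \sqrt{-\psi})\bigr)$ as claimed.

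The step I expect to be the main obstacle is pinning down the exact Gaussian-tail approximation so that the factor $2\theta$ (rather than a bare $\theta$) appears inside the logarithm defining $\psi$: a naive balance $\phi(u^*)\theta \sim 1$ would give $(u^*)^2 \sim 2\log(\theta/\sqrt{2\pi})$, missing the extra $\log 2$. Recovering this term will likely require retaining the $(1+r^*)$ correction and a Mills-ratio-style refinement $Q(u^*) \approx \phi(u^*)/u^*$ in $Q(-u^*) = 1 - Q(u^*)$, so that the factor $1/2$ on the right emerges from a second-order cancellation rather than a crude leading-order argument. A secondary concern is justifying that the stationary point is indeed the global maximum, but this follows from $\eta(0)=0$, $\lim_{R\to\infty}\eta(R)=0$, and continuity of $\eta$ on $(0,\infty)$, so it can be dispatched by inspection once the unique critical point has been identified.
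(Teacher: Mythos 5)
Your overall route is the same as the paper's: write $\eta = \log_2(1+r)\,Q(\theta-\kappa/r)$, linearize $\log_2(1+r)\approx r/\ln 2$ for small $r$ (justified by small $\epsilon$ forcing $K_{\min}$, hence $\theta$, large), and solve the stationarity condition, which after the small-$r$ simplifications reduces to $Q(-u^*) = (\theta+u^*)\phi(u^*)$ with $u^* = \kappa/r^*-\theta$. You also correctly flag that the whole theorem hinges on extracting the factor $2$ inside $\psi = 2\log\bigl(\sqrt{2\pi}/(2\theta)\bigr)$. The problem is that the mechanism you propose for that step does not produce it. A Mills-ratio refinement gives $Q(-u^*) = 1 - Q(u^*) \approx 1 - \phi(u^*)/u^*$, and the balance then reads $\phi(u^*) \approx 1/(\theta+u^*+1/u^*) \approx 1/\theta$, i.e.\ $(u^*)^2 \approx 2\log(\theta/\sqrt{2\pi})$ -- exactly the "bare $\theta$" answer you were trying to avoid; retaining the $(1+r^*)$ correction only perturbs this by $O(1/\theta)$ and cannot supply a multiplicative $2$. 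The paper gets the $2$ from the opposite expansion: it approximates $Q(y)$ near $y=0$ via the erfc formula of \cite{Decker75Computer}, $Q(y)\approx \tfrac12 - y\,\phi(y)$, so that $Q(-u^*)\approx \tfrac12 + u^*\phi(u^*)$; the $u^*\phi(u^*)$ term then cancels exactly against the $u^*\phi(u^*)$ part of $(\theta+u^*)\phi(u^*)$, leaving $\tfrac12 = \theta\,\phi(u^*)$ and hence $(u^*)^2 = 2\log(2\theta/\sqrt{2\pi}) = -\psi$. So the constant $\tfrac12$ is $Q(0)$ surviving a cancellation, not a second-order tail effect, and your derivation as written would land on a different (and, per the theorem statement, wrong) value of $\psi$.

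A secondary issue: your global-maximality argument via $\eta(0)=0$ and $\lim_{R\to\infty}\eta(R)=0$ fails under the Gaussian approximation of $P_c$ actually used in the optimization, since there $P_c \to Q(\theta) > 0$ as $r\to\infty$ and $\eta = R P_c$ diverges (an artifact of approximating the nonnegative variable $S_h$ by a Gaussian). The quadratic $(\theta^2+\psi)r^2 - 2\theta\kappa r + \kappa^2 = 0$ has two roots $r^{\pm} = \kappa/(\theta \mp \sqrt{-\psi})$, and you must still rule out $r^+$. The paper does this by showing $d\eta/dr < 0$ for all $r\in(r^-,r^+)$ (parametrizing $r = \kappa/(\theta+\sqrt{-\psi}-c)$), which identifies $r^-$ as the local maximum and $r^+$ as a local minimum; some such comparison of the two critical points is needed in place of the boundary argument.
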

\begin{proof}
  Details of proof are given by Appendix \ref{Appendix:Opt_R}.   
\end{proof}

With the maximum covert rate, $R_{\max}$, we achieve the maximum throughput $\eta_{\max}$ as shown in Corollary \ref{corollary:Max_Throughput}.
\begin{corollary}\label{corollary:Max_Throughput}
  The maximum throughput, $\eta_{\max}$ is given by 
  \begin{equation}\label{eq:Max_Throughput}
    \eta_{\max}=\log_2\left(1+\frac{\kappa}{\theta + \sqrt{-\psi}}\right) Q(-\sqrt{-\psi}).
  \end{equation}
\end{corollary}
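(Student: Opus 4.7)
The plan is to substitute the optimal covert rate from Theorem \ref{theorem:Opt_R} directly into the throughput expression $\eta = R P_c$, using the connection probability derived in Lemma \ref{lemma:connection_prob}. Since $R_{\max}$ was chosen precisely to maximize $\eta$ subject to the covert constraint for a given $P_a$, the maximum throughput is simply $\eta_{\max} = R_{\max} \cdot P_c(R_{\max})$, so the remaining work is to simplify $P_c$ when evaluated at this optimal rate.

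First I would write $r_{\max} = 2^{R_{\max}} - 1 = \frac{\kappa}{\theta + \sqrt{-\psi}}$ using Theorem \ref{theorem:Opt_R}, and invert to get $\frac{1}{r_{\max}} = \frac{\theta + \sqrt{-\psi}}{\kappa}$. Then I would plug this into the $Q$-function argument of Lemma \ref{lemma:connection_prob}, namely
\begin{equation*}
  \frac{\mu P_{\max} + \sigma_b^2 - \frac{P_a|h^a_{a,b}|^2}{r_{\max}}}{\Xi P_{\max}}.
\end{equation*}
Recognising the two normalised quantities $\theta = \frac{\mu P_{\max}+\sigma_b^2}{P_{\max}\Xi}$ and $\kappa = \frac{P_a|h^a_{a,b}|^2}{P_{\max}\Xi}$, the argument collapses to $\theta - \kappa/r_{\max} = \theta - (\theta + \sqrt{-\psi}) = -\sqrt{-\psi}$.

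Therefore $P_c = Q(-\sqrt{-\psi})$, and multiplying by $R_{\max}$ yields exactly
\begin{equation*}
  \eta_{\max} = \log_2\!\left(1+\frac{\kappa}{\theta + \sqrt{-\psi}}\right) Q(-\sqrt{-\psi}),
\end{equation*}
which is the claim. The argument is essentially a direct substitution; there is no real obstacle, since all the analytic heavy lifting has already been done in Theorem \ref{theorem:Opt_R} (which solved the first-order optimality condition giving rise to the $\sqrt{-\psi}$ term). The only thing to double-check is that $\psi < 0$ in the regime of interest, so that $\sqrt{-\psi}$ is real; this follows from $\psi = 2\log\frac{\sqrt{2\pi}}{2\theta}$ being negative whenever $\theta > \sqrt{2\pi}/2$, which is the same condition already implicit in the derivation of $R_{\max}$ in Theorem \ref{theorem:Opt_R}. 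No new assumptions beyond those of the preceding lemma and theorem are required.
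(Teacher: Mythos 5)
Your proof is correct and follows essentially the same route as the paper's own argument: substitute $r_{\max} = \kappa/(\theta+\sqrt{-\psi})$ from Theorem \ref{theorem:Opt_R} into the $Q$-function argument of Lemma \ref{lemma:connection_prob}, which collapses to $\theta - \kappa/r_{\max} = -\sqrt{-\psi}$, giving $P_c = Q(-\sqrt{-\psi})$. Your added check that $\psi<0$ (i.e., $\theta>\sqrt{2\pi}/2$) is a reasonable extra precaution that the paper leaves implicit.
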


\begin{proof}
  From \eqref{eq:connection_prob} and \eqref{eq:Opt_R}, the maximum throughput at a given $P_a$ becomes
  \[
    \eta_{\max} =  \log_2\left(1+\frac{\kappa}{\theta + \sqrt{-\psi}}\right) Q\left(\frac{\mu P_{\max} + \sigma_b^2 - \frac{P_a|h^a_{a,b}|^2}{r_{\max}} }{\Xi P_{\max}}\right),
  \]
  where $r_{\max} = 2^{R_{\max}}-1$. The connection probability, $P_c$ can be further simplified as
   \begin{align*}
       P_c & = Q\left(\frac{\mu P_{\max} + \sigma_b^2 - \frac{P_a|h_{a,b}|^2}{r_{\max}} }{\Xi P_{\max}}\right) \\
       & =  Q\left(\frac{\mu P_{\max} + \sigma_b^2}{\Xi P_{\max}} - \frac{P_a |h^a_{a,b}|^2 \left(\theta + \sqrt{-\psi}\right)}{\kappa \Xi P_{\max}}\right) \\
       & = Q(-\sqrt{-\psi}).
   \end{align*} 
 This completes the proof. 
\end{proof}

 For a given number of users, $M$ and a small $\epsilon$ value, we can achieve the maximum throughput in a closed-form expression, $\eta_{\max}$ as seen in \eqref{eq:Max_Throughput}. However, the result in \eqref{eq:Max_Throughput} depends on the transmit power for the covert communication, i.e., $P_a$. Thus, we have to find the optimal $P_a$, denoted by $P_a^\ast$ at which $\eta_{\max}$ is maximized. That is,
\begin{subequations}\label{eq:Opt_Pa}
  \begin{align}
    & P_a^*  = \argmax_{P_a} \eta_{\max} \label{eq:Opt_Pa_obj}\\
    & {\rm s.t.} ~~ 0 \leq P_a \leq P_{\max}. \label{eq:Opt_Pa_const}
  \end{align}  
\end{subequations}
Note that the existing works \cite{Zheng21Wireless,Shahzad18Achieving} assume that the transmit power of Alice, i.e., $P_a$ is given. In particular, the authors in \cite{Zheng21Wireless} conduct numerical searches to find all the covert parameters such as the optimal detection threshold, $\gamma^\ast$, the maximum covert rate, $R_{\max}$ and activation threshold, $\tau$ for a given $P_a$. Meanwhile, this work optimizes all the system parameters by conducting the one-dimensional search for $P_a$ in \eqref{eq:Opt_Pa} since all the other system parameters are explicitly expressed as functions of $P_a$.

The analytic results developed in this work can be utilized in various ways. As an example, we consider the covert communication problem with energy efficiency into account. In doing so, we first define the energy efficiency, $E_{\rm eff}$ as the ratio of the throughput to the total power consumption which includes the transmit power of Alice, $P_a$, and the total interference power consumed by the cooperative users. Then, the transmit power, $P_a$ is now determined to maximize the energy efficiency, i.e., 
\begin{subequations}\label{eq:Opt_EE_Pa}
 \begin{align}
   & P_a^*  = \argmax_{P_a} E_{\rm eff}  \label{eq:Opt_EE_Pa_obj}\\
     & {\rm s.t.} ~~ 0 \leq P_a \leq P_{\max}, \label{eq:Opt_EE_Pa_const}
  \end{align}  
\end{subequations} 
where
\begin{equation} \label{eq:Eff}
  E_{\rm eff} = \argmax_{P_a} \frac{\eta_{\max}}{K_{\min}P_{\max} + P_a}.
\end{equation}
Note that all the parameters in the definition of the energy efficiency, $E_{\rm eff}$ can be  efficiently evaluated for a given $P_a$. Thus, the maximization in \eqref{eq:Opt_EE_Pa_const} is readily conducted with the one-dimensional search for $P_a$. In Section \ref{Sec:numerical}, the covert communications with/without the consideration of energy efficient will be compared in terms of throughput and total transmit power.

\section{Numerical results} \label{Sec:numerical}
In this section, we carry out numerical evaluations which are compared with our theoretical results to confirm the analyses in this work. In addition, we will uncover the relations among system parameters. In the evaluations, it is assumed that the maximum transmit power, i.e., $P_{\max}$, is set to 1, and the channel between Alice and Bob is normalized as $|h^a_{a,b}|^2 = 1$. 

We first numerically evaluate DEP, $\zeta$, in Fig. \ref{fig:DEP_gamma}, versus the detection threshold, $\gamma$ with different values of $K$, i.e., $K = 15$ and $25$. The numerical evaluation is performed by generating realizations of the random variable in \eqref{eq:test_statistic_RV}, and an empirical DEP for a detection threshold, $\gamma$, is obtained by conducting the test in \eqref{eq:test} with the realizations. In Fig. \ref{fig:DEP_gamma}, it is witnessed that the minimum DEP values, $\zeta_{\min}$, are obtained at detection thresholds, $\gamma^\ast$ of 0.901 and 0.922 for $K = 15$ and 25, respectively. The detection thresholds for $\zeta_{\min}$ are also evaluated with the theoretical result in \eqref{eq:Opt_gamma}. The comparisons between the numerical and theoretical results look well matched, which confirms the theoretical derivation for $\gamma^\ast$ in \eqref{eq:Opt_gamma}. It should also be noted that the minimum DEP becomes larger with the more number of cooperative users, i.e., $K = 25$, which results in a higher interference power and ends up making Willie more confused. 

\begin{figure}[htb]
\centering
\includegraphics[width=\figwidth]{./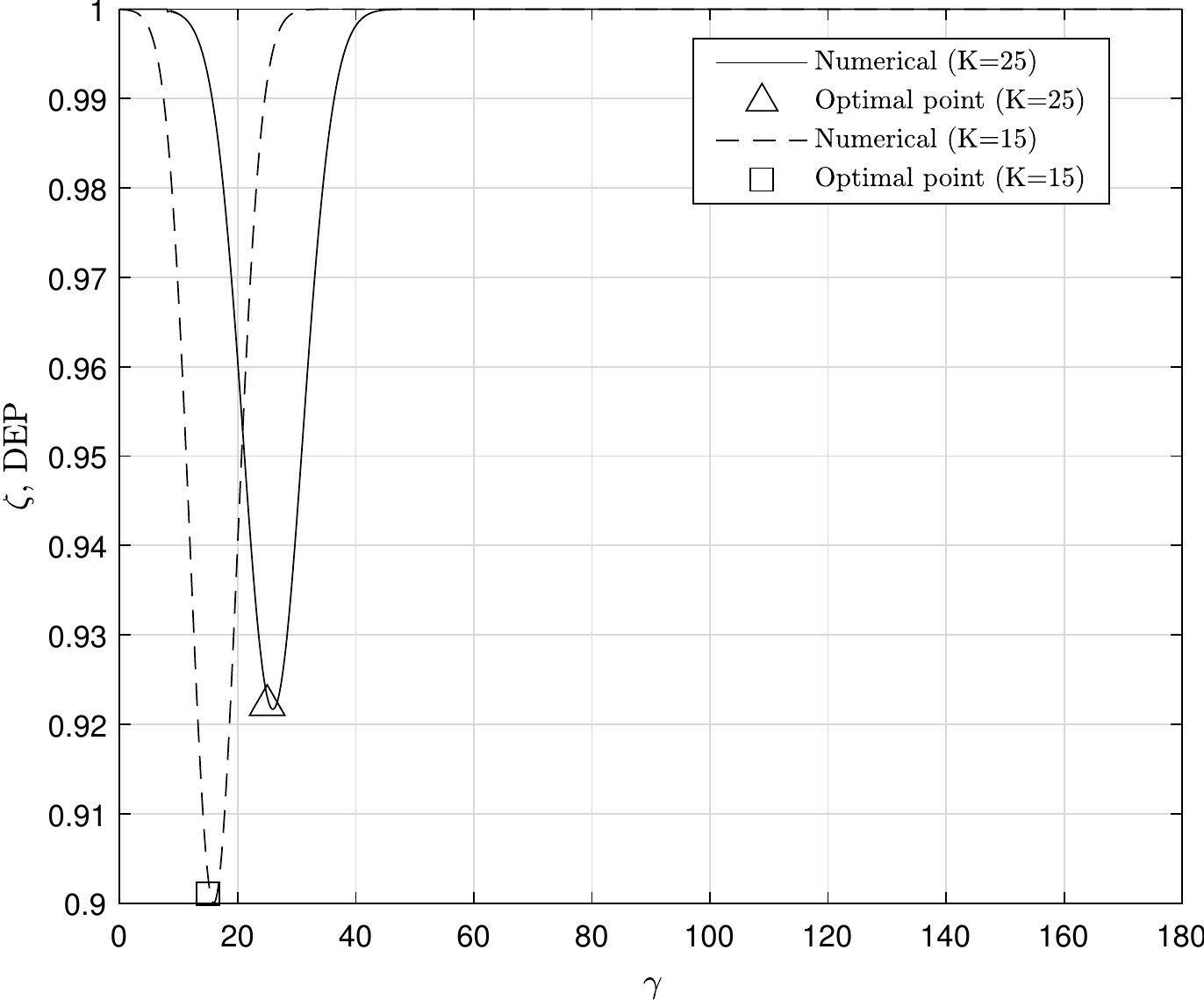}
\caption{Detection error probability, $\zeta$, versus the detection threshold, $\gamma$, with $K = 25$ and $K = 15$, and $P_a = P_{\max}$.}
\label{fig:DEP_gamma} 
\end{figure}

Now, we numerically evaluate the minimum detection error probabilities, $\zeta_{\min}$ for different numbers of cooperative users, $K$ between 1 and 100 setting $P_a$ to either 0.67 or 0.83. The numerical evaluations are obtained by searching for the minimums of the empirical DEP for $K$ values. The evaluations of minimum DEPs are depicted in Fig. \ref{fig:DEP_K} where the minimum number of cooperative users, $K_{\min}$ to meet the covert constraint, $\zeta \ge 1 - \epsilon$, for $P_a = 0.67$ and 0.83 are given by 28 and 43, respectively, when $\epsilon = 0.05$. Note that $K_{\min}$ is the value on abscissa corresponding to $\zeta_{\min} = 1 - \epsilon = 0.95 $ on the ordinate. The $K_{\min}$ values obtained from the numerical evaluations are compared with the theoretical result in \eqref{eq:DEP}, which substantiates the theoretical derivation for $K_{\min}$. In addition, the results in Fig. \ref{fig:DEP_K} tell that the minimum DEP, $\zeta_{\min}$ is in proportion to the number of cooperative users, $K$. The minimum number of cooperative users, $K_{\min}$ also grows with the increasing transmit power, $P_a$ which is predicted by the closed-form expression of $K_{\min}$ in \eqref{eq:Opt_K}.

\begin{figure}[htb]
\centering
\includegraphics[width=\figwidth]{./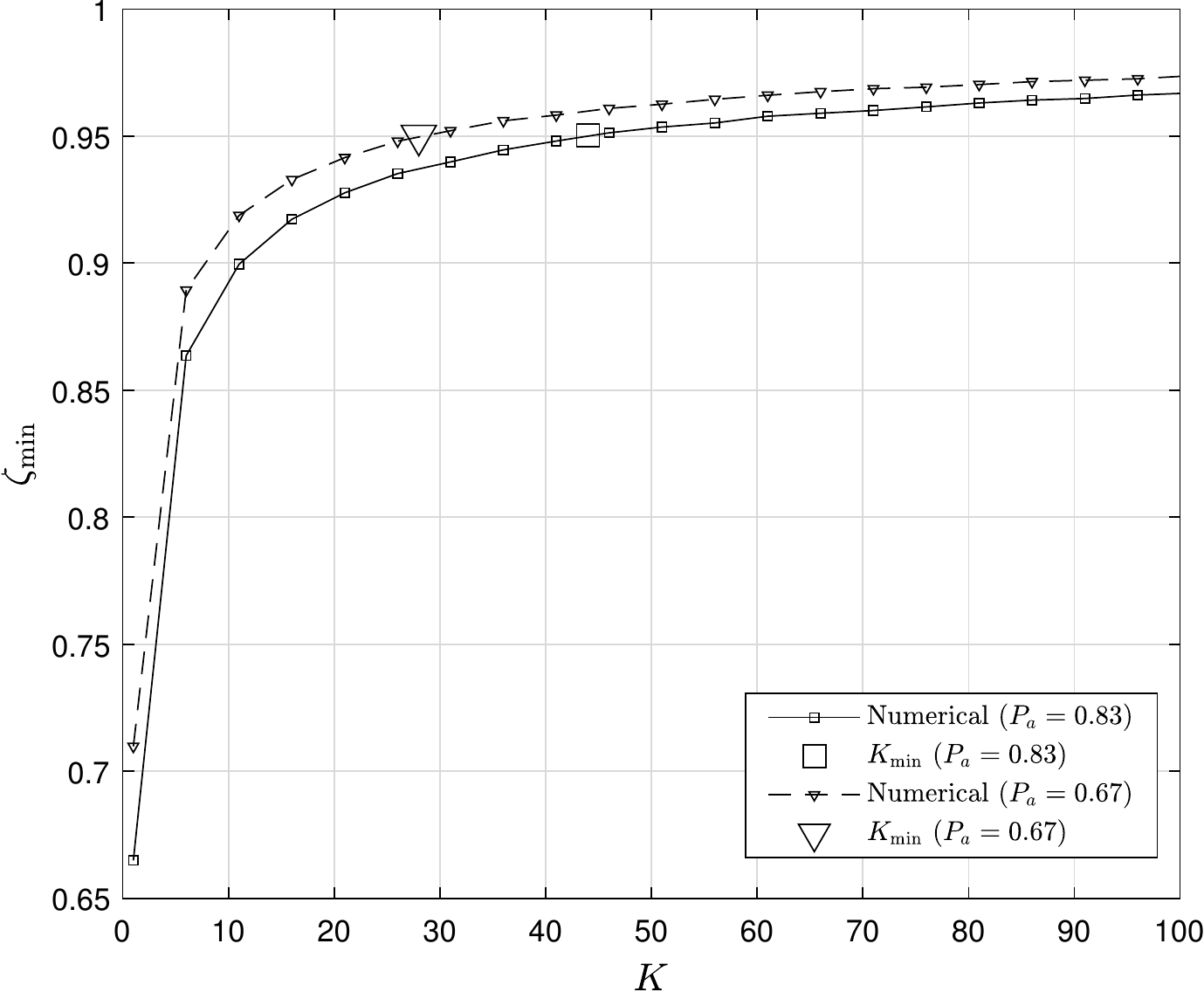}
\caption{Minimum detection error probability, $\zeta_{\min}$, versus the number of cooperative users, $K$, when $\epsilon=0.05$.}
\label{fig:DEP_K} 
\end{figure}

In Fig. \ref{fig:eta_R}, the throughput, $\eta = R P_c$ versus the covert rate, $R$ is depicted for $\epsilon = 0.030$ and 0.025 values when $M=500$, and $P_a = 0.83$. The numerical results in Fig. \ref{fig:eta_R} are simply given by the product of the empirical connection probability and the covert rate. Meanwhile, the theoretical results for the throughput can be readily obtained by evaluating $\eta = R P_c$ where $P_c$ is given by Lemma \ref{lemma:connection_prob}. The empirical and theoretical results in Fig. \ref{fig:eta_R} look close to each other, which confirms the derivation for the connection probability, $P_c$. It is observed that the throughput, $\eta$ is maximized at $R_{\max} = 0.0564$ and 0.0279 for $\epsilon = 0.030$ and $0.025$, respectively which are predicted by the result in Theorem \ref{theorem:Opt_R}. As expected, the maximum throughput gets reduced with the stricter covert constraint, i.e. $\epsilon = 0.025$. The numerical evaluations in Fig. \ref{fig:eta_R} confirm our derivation for $R_{\max}$ in Theorem \ref{theorem:Opt_R}.

\begin{figure}[htb]
\centering
\includegraphics[width=\figwidth]{./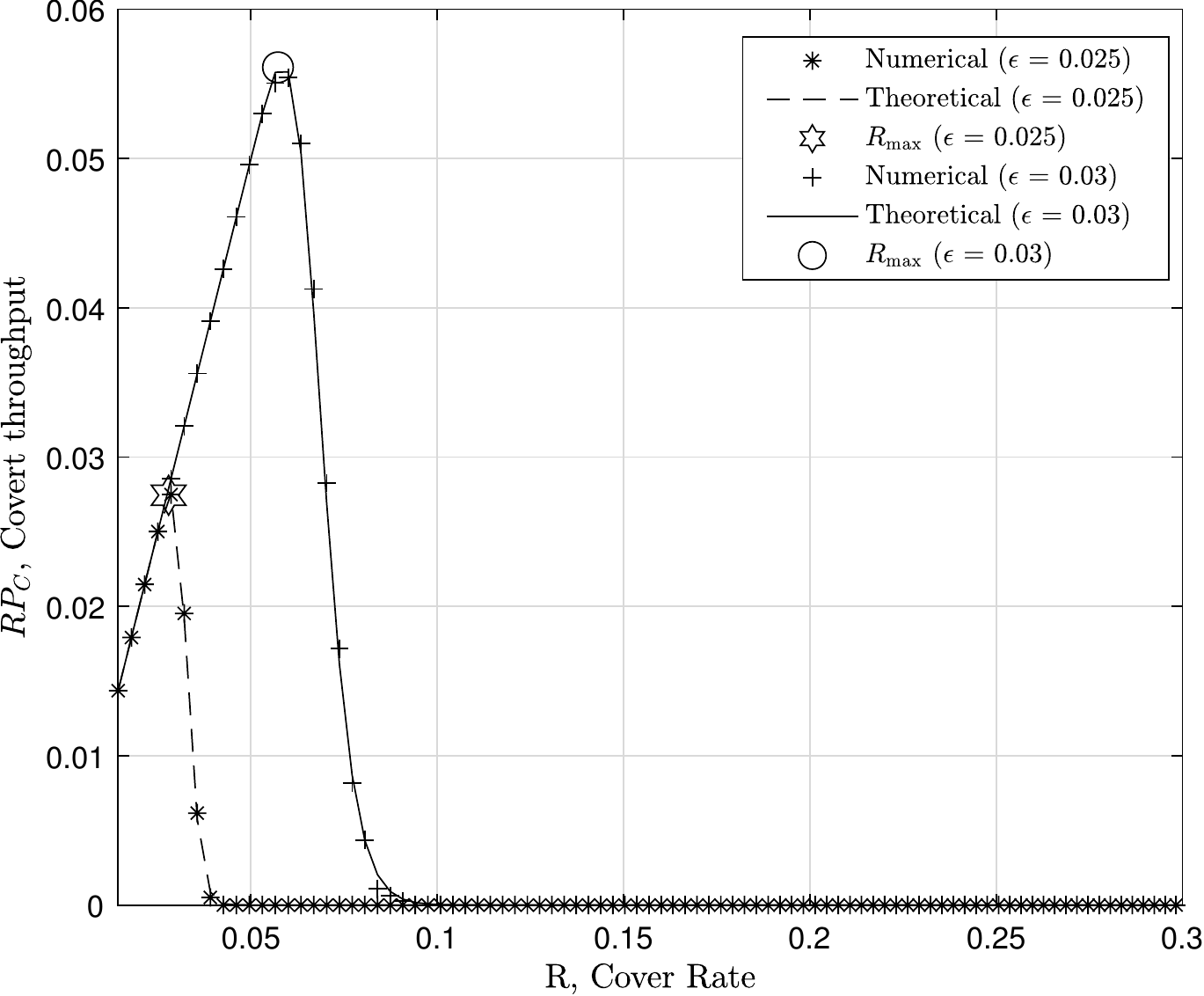}
\caption{Throughput, $\eta$, versus the covert rates, $R$, for $M=500$, $P_a = 0.83$, $\epsilon=0.03$, and $\epsilon=0.025$.}
\label{fig:eta_R} 
\end{figure}

In Fig. \ref{fig:eta_epsilon}, we evaluate the peak maximum throughputs, i.e., $\eta^*_{\max} = \eta_{\max}(P^\ast_a)$ versus the covert constraint, $\zeta \ge 1 - \epsilon$ for $M = 1000$, $M = 500$, and $M = 200$ when $P_{\max} = 1$. The peaks of the maximum throughput are found out by conducting the one-dimensional search in \eqref{eq:Opt_Pa}, which is denoted by `Theoretical' in Fig. \ref{fig:eta_epsilon}. Meanwhile, we also numerically find the peaks of the maximum throughput with realizations of the random variable, $S_h$ in \eqref{eq:re_connection}. It is observed that $\eta^*_{\max}$ decreases as the covert constraint gets tighter since the more cooperative users $K_{\min}^\ast$ have to transmit the interfering signals to meet the more stringent covert constraint. In addition, the results in Fig. \ref{fig:eta_epsilon} also show that $\eta^*_{\max}$ increases with the growing number of users, which is due to the multi-user diversity effect. That is, the $K_{\min}$-th weakest channel gain between the users and Bob becomes reduced with a high probability as the number of users increases, which makes all the channel gains of cooperative users reduced. Thus, the undesirable interference due to the cooperative users at Bob gets diminished. Meanwhile, the number of users does not change the statistics of the channel gains between the cooperative users and Willie.

\begin{figure}[htb]
\centering
\includegraphics[width=\figwidth]{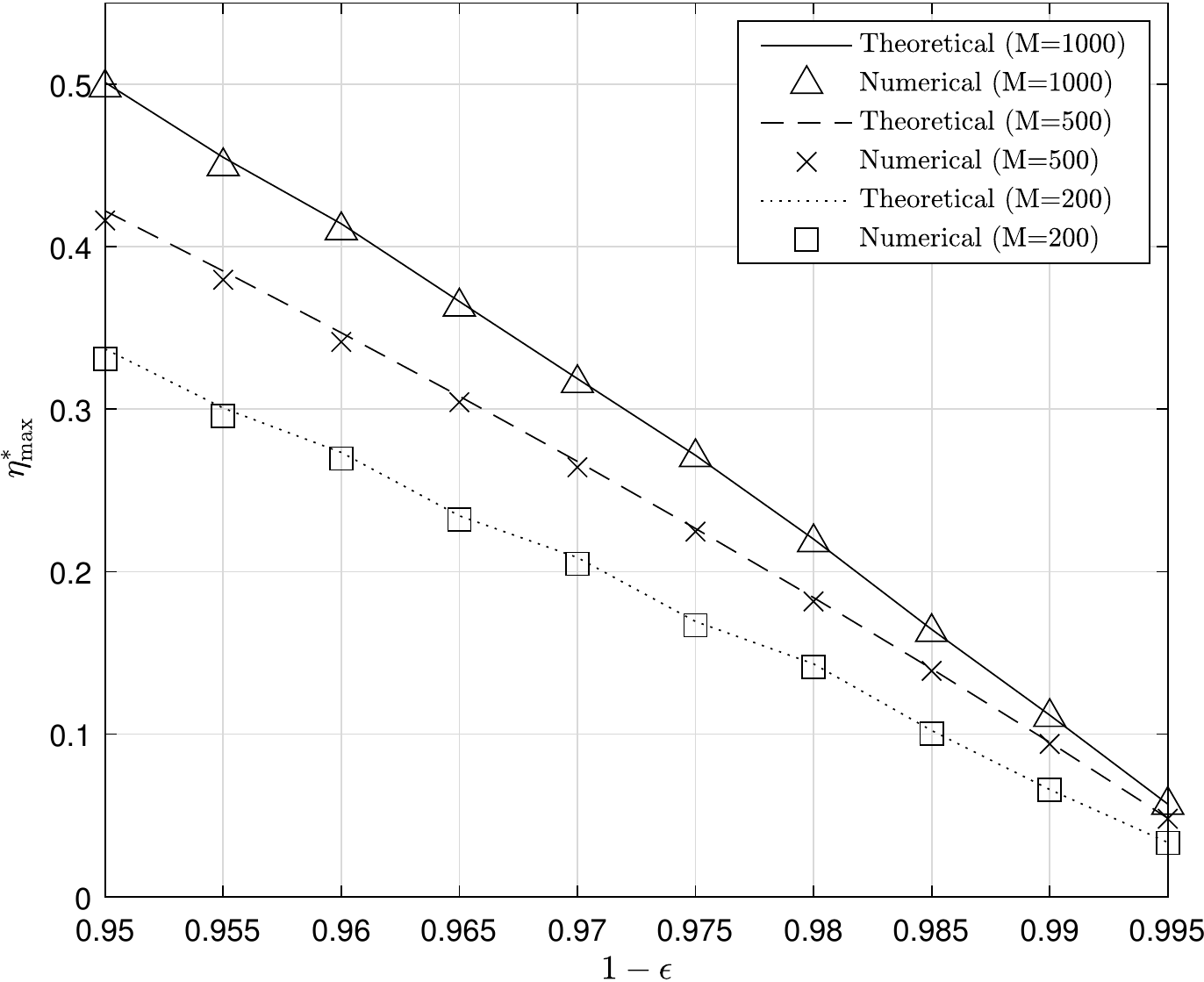}
\caption{The peak maximum throughput, $\eta^*_{\max}$, versus the covert constraint, $\zeta \geq 1 - \epsilon$, for $M=1000$, $M=500$, and $M=200$ when $P_{\max} = 1$.}
\label{fig:eta_epsilon} 
\end{figure}

In Fig. \ref{fig:Pa_K_epsilon_Throughput}, we depict the optimal transmit power of Alice, $P_a^*$ and the optimal number of cooperative users, $K^*_{\min} = K_{\min}(P^\ast_a)$ at $P_a = P^\ast_a$ versus the covert constraint, $\zeta \ge 1 - \epsilon$ for $M = 500$. It is noticed that $P_a^*$ tends to linearly increase as the covert constraint, $1 - \epsilon$, decreases from unity. Note that the decrease of the covert constraint makes it possible to increase the transmit power, $P_a$ and/or reduce the minimum number of cooperative users at $P_a = P^\ast_a$, i.e., $K^\ast_{\min}$. However, it is interesting to see that only the transmit power increases in Fig. \ref{fig:Pa_K_epsilon_Throughput} until $P^\ast_a$ reaches $P_{\max}$ while the minimum number of cooperative users, $K^\ast_{\min}$ stays the same. In Appendix \ref{Appendix:CrossPoint}, we show that $P^\ast_a$ linearly grows with the increasing $\epsilon$ until the transmit power, $P^\ast_a$ reaches its maximum value. Afterward, the decrease of $K^\ast_{\min}$ happens. In Appendix. \ref{Appendix:CrossPoint}, we also investigate the cross point of the minimum DEP, denoted by $\xi$ in Fig. \ref{fig:Pa_K_epsilon_Throughput}, at which the transmit power $P^\ast_a$ reaches its maximum value.  In \eqref{eq:cross_point}, $\xi$ is given by
\[
  \xi = \frac{-\rho+\sqrt{\rho^2 + 16}}{8}
\]
where 
\[ 
  \rho = \sqrt{\frac{\pi}{3} \left(\sqrt{\frac{24M\sigma_b^2}{P_{\max}}+1}-1\right)}.
\]

\begin{figure}[btb]
\centering
\includegraphics[width=\figwidth]{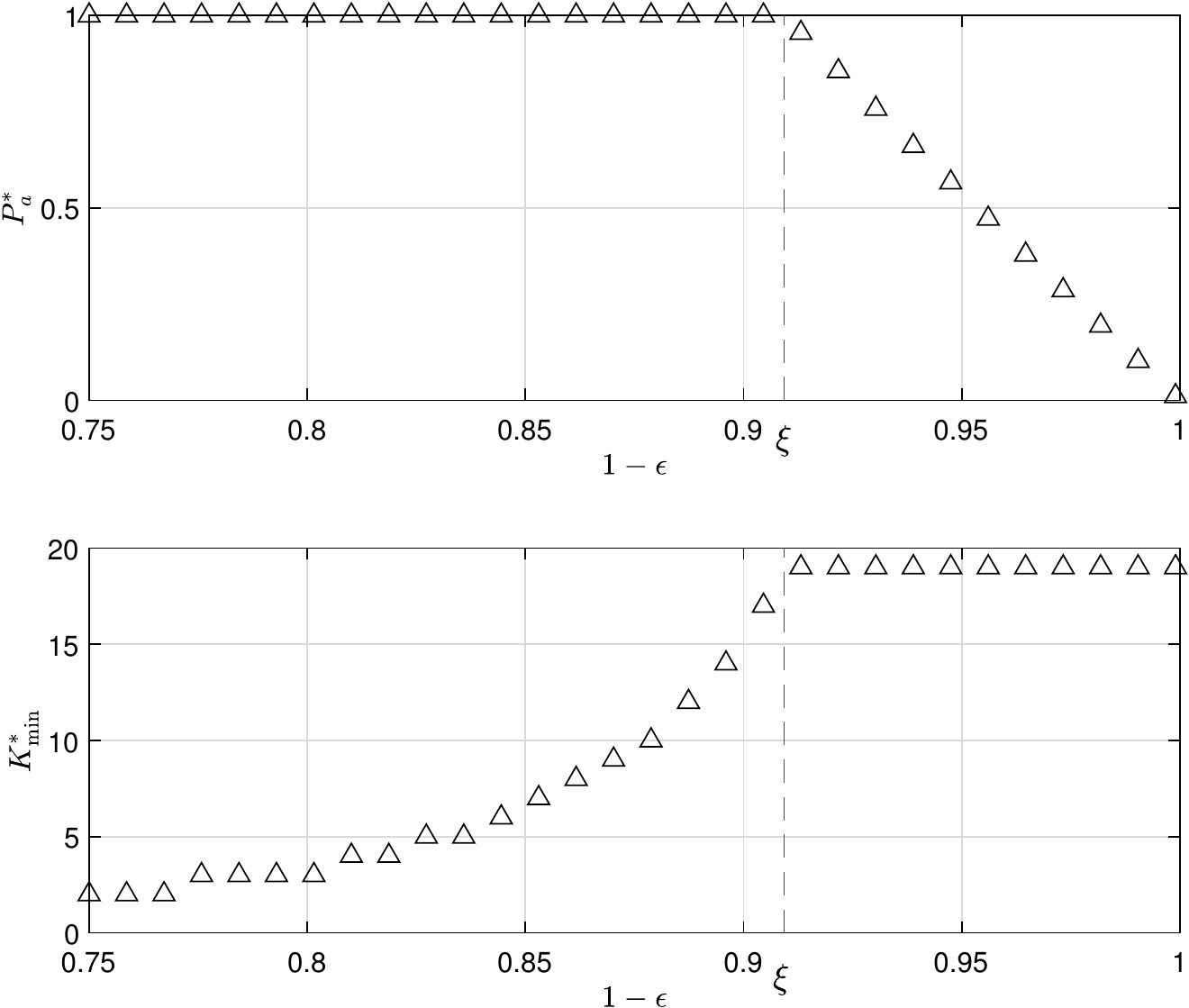}
\caption{Optimal transmit power of Alice, $P_a^*$, and the optimal number of cooperative users, $K^*_{\min}$, versus the covert constraint, $\zeta \ge 1 - \epsilon$ when $M=500$.}
\label{fig:Pa_K_epsilon_Throughput} 
\end{figure}

Now, we perform the optimization problem in \eqref{eq:Opt_EE_Pa} where the energy efficiency, i.e., $E_{\rm ff}$ in \eqref{eq:Eff}, is maximized while satisfying the covert constraint. In Fig. \ref{fig:eta_EE_epsilon}, we compare the throughputs, $\eta^\ast_{\max}$, for the two optimization problems, i.e., the ones with/without the consideration of energy efficiency, versus the covert constraint, $1 - \epsilon$ for $M = 500$. It can be observed that both the throughputs increase with the decreasing covert constraint. The energy efficiencies in \eqref{eq:Opt_EE_Pa} are also compared in Fig. \ref{fig:eta_EE_epsilon} where the design with the optimization problem in \eqref{eq:Opt_EE_Pa} has much higher energy efficiencies throughout all the covert constraints.

\begin{figure}[htb]
\centering
\includegraphics[width=\figwidth]{./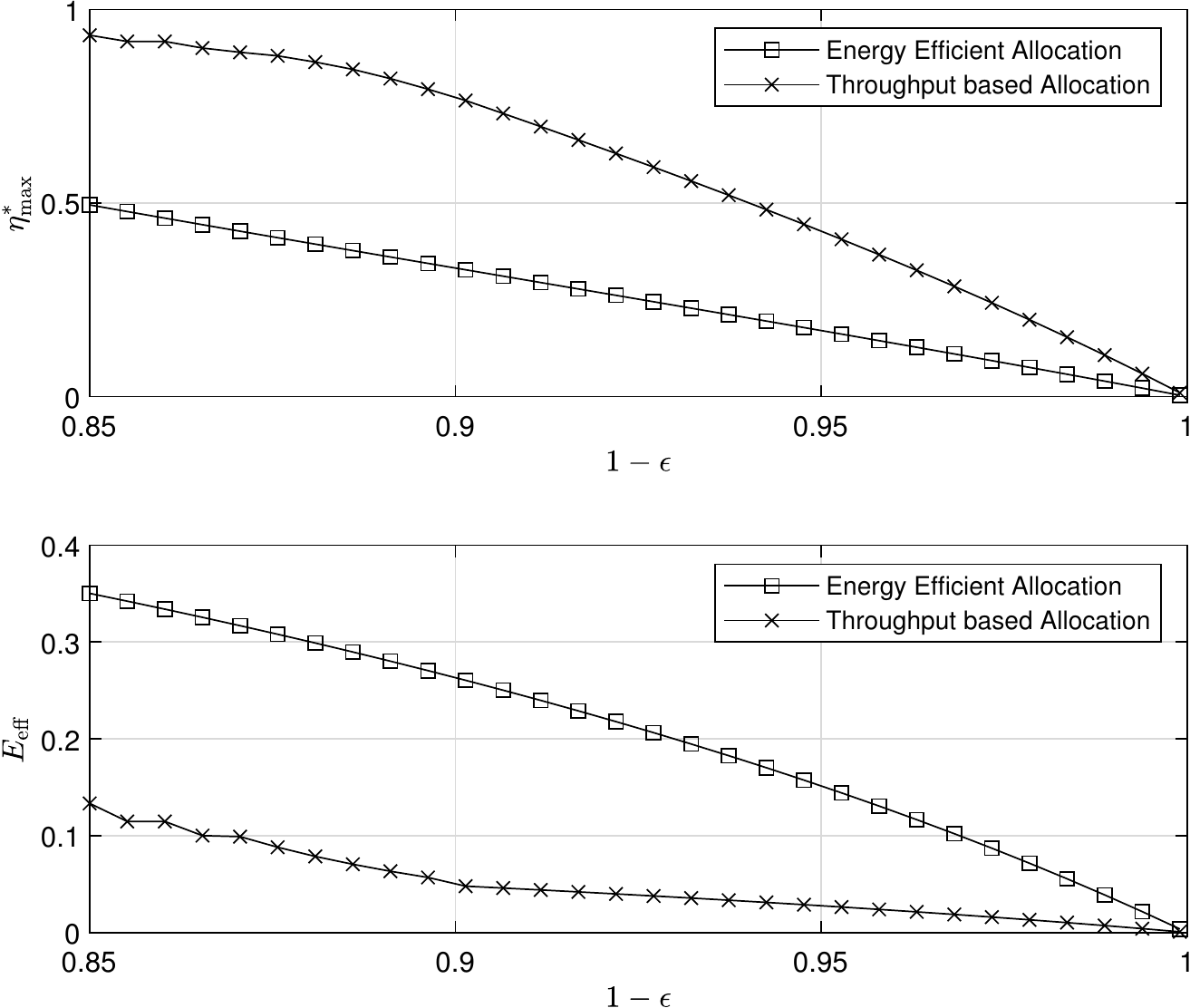}
\caption{Optimal throughput, $\eta^*$, and energy efficiency, $E_{\rm eff}$, for optimal $P_a^*$ versus $1-\epsilon$ according to throughput based allocation and energy efficient allocation when $M=500$.}
\label{fig:eta_EE_epsilon} 
\end{figure}

In Fig. \ref{fig:Pa_K_TotalPower_epsilon}, the two system designs are compared in terms of $P^\ast_a$ and $K^\ast_{\min}$ versus $1-\epsilon$ for $M=500$. In both the designs, the transmit power $P^\ast_a$ linearly increases as the covert constraint, $1 - \epsilon$ decreases. Note that the energy-efficient design has only one cooperative user to minimize power consumption. Since the power of the interference signals from the cooperative users is minimized, the transmit power, $P^\ast_a$ is proportionally reduced as compared to the design without the consideration of energy efficiency. As expected, it is noticed that the total power, i.e., $K_{\min}^*P_{\max} + P_a^*$ has a big discrepancy between the two designs.

\begin{figure}[htb]
\centering
\includegraphics[width=\widefigwidth]{./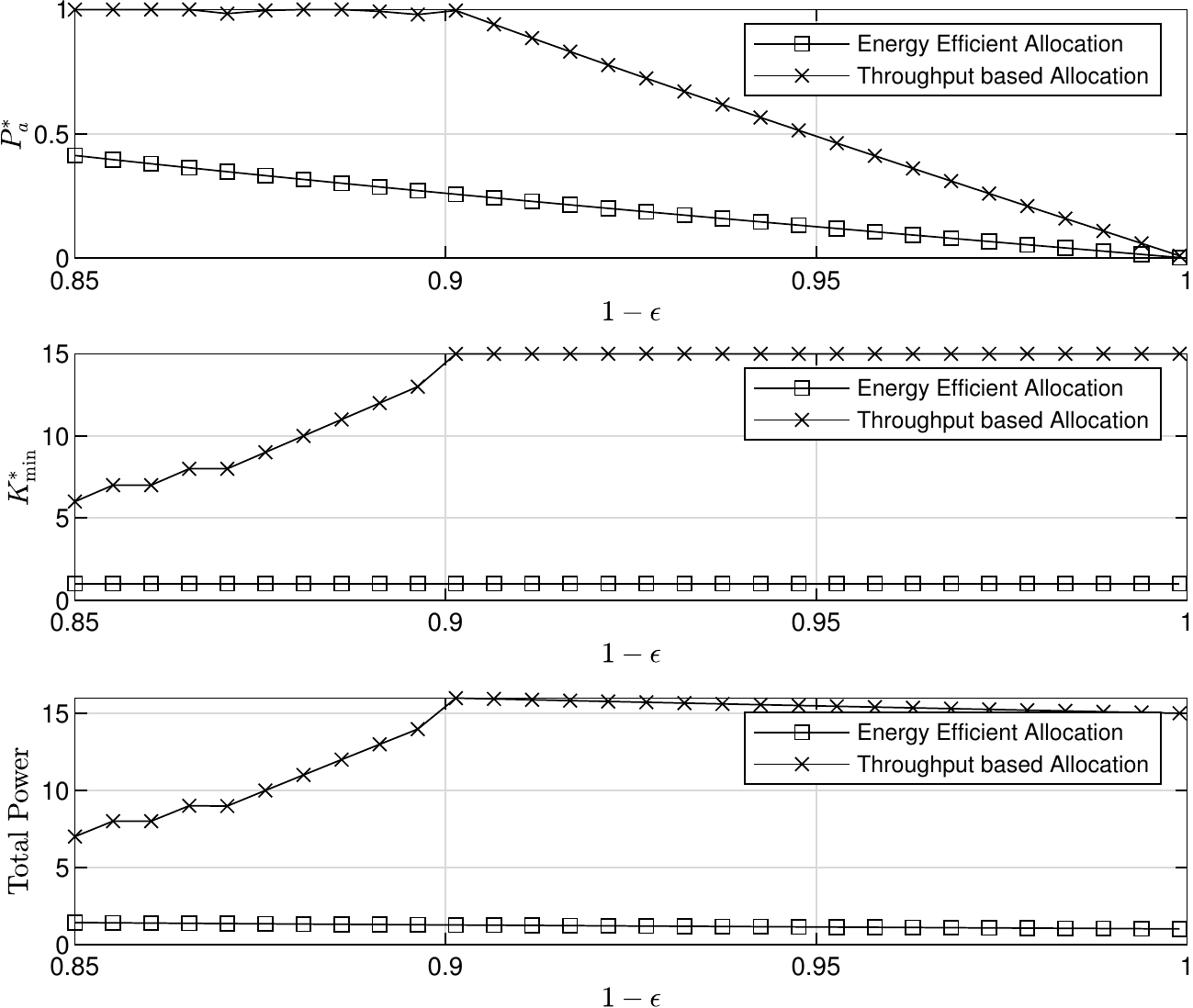}
\caption{Optimal transmit power of Alice, $P_a^*$, the optimal number of cooperative users, $K_{\min}^*$, and total power, $K_{\min}^*P_{\max} + P_a^\ast$, versus $1-\epsilon$ for throughput based allocation and energy efficient allocation when $M=500$.}
\label{fig:Pa_K_TotalPower_epsilon} 
\end{figure}

\section{Conclusions}\label{Sec:Conclusions}
In this paper, we investigate an uplink covert communication scheme with cooperative users which takes advantage of channel independence and multi-user diversity. For the covert communication scheme, we derive important system parameters such as the optimal power profile of cooperative users, the minimum number of cooperative users, and the optimal activation threshold in closed forms. We also derive a closed-form expression of the covert rate with which an analytic expression of the maximum throughput is obtained as a function of the transmit power $P_a$. The work greatly simplifies the multi-dimensional optimization problem to a simple one-dimensional search of the transmit power. The analytic results are confirmed with simulation results under various situations. In addition, the analytic results are further extended to tell interesting behaviors of the proposed scheme such as the changes of $P^\ast_a$ and $K^\ast$ as the covert constraint varies. Finally, we show that the results of work can be utilized in versatile ways by demonstrating a design of covert communication with energy efficiency into account.

\appendices

\section{Proof of Proposition \ref{prop:power_profile}} \label{Appendix:power_profile}
For given $R$ and $P_a$, the optimization problem of (\ref{eq:original_opt}) can be expressed as
\begin{subequations}\label{eq:powerprofile_Opt}
\begin{align}
  \bar{P}^* & = \argmax_{\bar{P}} P_c(\bar{P}) \nonumber\\
  & = \argmax_{\bar{P}} \frac{P_a |h^a_{a,b}|^2}{\sum_{m=1}^M P_m|h^a_{m,b}|^2 + \sigma_b^2}\\
  & s.t. ~~~ \zeta \geq 1-\epsilon\\
  & ~~~~~ 0 \leq P_m \leq P_{\max}.
\end{align}
\end{subequations}
Using the Lyapunov central limit theorem, we can respectively express the probability density functions (PDFs) of the test statistic in \eqref{eq:test_statistic}, denoted by $p\left(T_w ; H_0\right)$ and $p\left(T_w ; H_1\right)$ as
\begin{align*}
   p\left(T_w ; H_0\right) &= \mathcal{N}(\omega,\Omega) \\ 
   p\left(T_w ; H_1\right) &= \mathcal{N}(\omega+P_a,\Omega+P_a^2)
 \end{align*} 
 where $\omega = \sum_{m=1}^M P_m$ and $\Omega = \sum_{m=1}^M P_m^2$. In addition, from (\ref{eq:DEP}), the detection error probability is derived as
 \begin{align*}
    \zeta  & = P_{\rm FA} + P_{\rm MD}\\
    &= \Pr(T_w > \gamma ; H_0) + \Pr(T_w \leq \gamma ; H_1)\\
    & = Q\left(\frac{\gamma-\omega}{\sqrt{\Omega}}\right) + Q\left(\frac{\omega + P_a - \gamma}{\sqrt{\Omega+P_a^2}}\right).
 \end{align*}
 For $P_{\rm FA} = \alpha$,
 \[
\gamma = \sqrt{\Omega}Q^{-1}(\alpha) + \omega,
 \]
 so that
 \[
P_{\rm MD} = Q\left(\frac{P_a - \sqrt{\Omega}Q^{-1}(\alpha)}{\sqrt{\Omega+P_a^2}}\right).
 \]
Finally
\begin{equation}\label{eq:DEP_Pi}
\zeta = \alpha + Q\left(\frac{P_a - \sqrt{\Omega}Q^{-1}(\alpha)}{\sqrt{\Omega+P_a^2}}\right).
\end{equation}
Note that $\zeta$ is a monotonically increasing function of $\Omega$ for any $0 \le \alpha \le 1$.

It will be shown that the minimum of $\zeta$ is also a monotonically increasing function of $\Omega$. For given $\Omega$ and $\hat\Omega$, there exist $\alpha^\ast$ and $\hat\alpha$ such that $\zeta(\alpha, \Omega)$ and $\zeta(\alpha, \hat\Omega)$ are minimized at $\alpha^*$ and $\hat\alpha$, respectively. Without loss of generality, we assume that $\Omega < \hat\Omega$. Since $\zeta(\alpha^\ast, \Omega) \le \zeta(\alpha, \Omega)$ for all $\alpha$, we have 
\[
  \zeta_{\min}(\Omega) = \zeta(\alpha^*,\Omega) \leq \zeta(\hat{\alpha},\Omega).
\]
Thus, 
\[
  \zeta_{\min}(\Omega) = \zeta(\alpha^*,\Omega) \mathop{\leq}_{(a)} \zeta(\hat{\alpha},\Omega) \mathop{<}_{(b)} \zeta_{\min}(\hat \Omega) = \zeta(\hat{\alpha},\hat{\Omega}),
\]
where (a) is due to the minimality of $\alpha^*$ for $\Omega$, and (b) is from the fact that $\zeta$ is an increasing function of $\Omega$ for a fixed $\hat{\alpha}$. The inequalities tell that $\zeta_{\min}(\Omega)$ is a monotonically increasing function of $\Omega$. Thus, the covert constraint, $\zeta_{\min}(\Omega) \ge 1 - \epsilon$ is turned into a lower bound on the total interference power
\begin{equation}\label{eq:convert_DEP_Pi}
  \Omega \geq \delta,
\end{equation}
where $\delta = \zeta^{-1}_{\min}(1-\epsilon)$. By using the Cauchy-Schwarz inequality \cite{Steele04Cauchy}, we have 
\begin{equation} \label{eq:Cauchy}
  \sum_{m=1}^M \left(\frac{P_m}{M}\right)^2 \mathop{\ge}_{(a)} \frac{1}{M}\left(\sum_{m=1}^M \frac{P_m}{M}\right)^2 \ge \frac{\delta}{M^2}.
\end{equation} 
Thus, it is clear that a power profile $\bar P$ satisfying $\sum_{m=1}^M P_m \ge \tilde\delta$  for $\tilde\delta = \sqrt{M\delta}$ also satisfies the inequality in \eqref{eq:convert_DEP_Pi}. It is also well known that for sufficiently large $M$, the inequality in \eqref{eq:Cauchy} denoted by (a) in \eqref{eq:Cauchy} gets tight. Thus, in this work, we replace the constraint in \eqref{eq:convert_DEP_Pi} with 
\[
  \sum_{m=1}^M P_m \ge \tilde\delta.
\]
Then, the optimization problem of (\ref{eq:powerprofile_Opt}) is reformulated as
\begin{subequations}\label{eq:powerprofile_Opt_reformul}
\begin{align}
  \min_{\bar{P}}& \sum_{m=1}^M P_m|h^a_{m,b}|^2 + \sigma_b^2  \label{eq:powerprofile_Opt_reformul_obj}\\
  & s.t. ~~~ \sum_{m=1}^M P_m \ge \tilde\delta, \\
  & ~~~~~ 0 \leq P_m \leq P_{\max}.
\end{align}
\end{subequations}
We find the optimal $\bar{P}^*$ to satisfy the Karush-Kuhn-Tucker (KKT) conditions of (\ref{eq:powerprofile_Opt_reformul}) which are summarized as follows:
\begin{itemize}
  \item
  Stationarity: $|h^a_{m,b}|^2 - \lambda_m^* + \omega_m^* - \nu^*  = 0$, $\forall\,m$,
  \item
  Complementary slackness: $\lambda_m^* P_m^* = 0$, $\omega_m^*(P_m^* - P_{\max})=0$, $\forall\,m$, and $\nu^* (\tilde\delta  - \sum_{m=1}^{M}P_m) = 0$, 
  \item
  Primal feasibility: $P_m^* \geq 0$, $P_m^*\leq P_{\max}$, $\forall\,m$ and $\sum_{m=1}^M {P_m^*} \ge \tilde\delta$, 
  \item
  Dual feasibility: $\lambda_m^*$, $\omega_m^*$, and $\nu^* \geq 0$, $\forall\,m$.
\end{itemize}
From the stationarity, we have $\lambda_m^* = |h^a_{m,b}|^2 + \omega_m^* -\nu^*$, which allows us to express the conditions as
\begin{enumerate}\label{enum:KKT}
  \item 
  $P_m^* \geq 0$, $P_m^* \leq P_{\max}$, $\omega_m^* \geq 0$, and $\nu^* \ge 0$,
  \item
  $\sum_{m=1}^M {P_m^*} \ge \tilde\delta$,
  \item
  $(|h^a_{m,b}|^2 + \omega_m^* - \nu^*)P_m^* = 0$,
  \item
  $\omega_m^* \left(P_m^* - P_{\max}\right) = 0$,
  \item
  $|h_{m,b}|^2 + \omega_m^* -\nu^* \geq 0$,
  \item
  $\nu^*(\tilde\delta  - \sum_{m=1}^{M}P_m) = 0$.
\end{enumerate}

\begin{table}[htb] 
\caption{Solutions with the KKT conditions; NS stands for no solution.}
\label{Tbl:KKT_Conditions}
\centering
\begin{tabular}{|>{\centering\arraybackslash} m{1cm}|>{\centering\arraybackslash} m{2cm}|>{\centering\arraybackslash} m{2cm}|>{\centering\arraybackslash} m{2cm}|}
\toprule 
  & $|h^a_{m,b}|^2 > \nu^*$ &$|h^a_{m,b}|^2 = \nu^*$ & $|h^a_{m,b}|^2  < \nu^*$   \\
\midrule
  $ \omega_m^* = 0 $ & $P_m^* = 0$ due to 3) & $P_m^* = \tilde\delta - \sum_{|h^a_{m,b}|^2 \neq \nu^*}P_m^*$ due to 6)& NS due to 5)\\ \hline
  $ \omega_m^* > 0$ & NS due to 3) and 4)  & NS due to 3) and 4)& $P_m^* = P_{\max}$ due to 4) \\ 
\bottomrule
\end{tabular}
\end{table}   

Based on the KKT conditions, the solution for the optimization problem in \eqref{eq:powerprofile_Opt_reformul} can be found in Table \ref{Tbl:KKT_Conditions} where the rows and columns represent the values of $\omega^\ast_m$ and the relations between $|h^a_{m, b}|^2$ and $\nu^\ast$, respectively. For some combinations, we do not have a solution that is denoted by `NS' in Table \ref{Tbl:KKT_Conditions}. Thus, the optimal $P_m^*$ can be expressed as 
\begin{align}
    P_m^*= 
  \begin{cases}
      P_{\max}, & |h^a_{m,b}|^2 < \tau \\
      0, & |h^a_{m,b}|^2 > \tau\\
      \displaystyle \tilde\delta - \sum_{m:|h^a_{m,b}|^2 < \tau}P_{\max}, & |h^a_{m,b}|^2 = \tau
  \end{cases},
\end{align}
where $\tau = \nu^*$. Note that the solution for $|h^a_{m,b}|^2 = \tau$, i.e., $P^\ast_m =  \tilde\delta - \sum_{m:|h^a_{m,b}|^2 < \tau}P_{\max}$ can be expressed as $P_{\max} = \tilde \delta - |\{m : |h^a_{m, b}|\}| P_{\max}$ which requires the number of users with their channel gains satisfying $|h^a_{m,b}|^2 < \tau$. However, the solution is not feasible since users do not know the channel gains of other users. Thus, for the given system model in Section \ref{Sec:System}, the optimal $P_m^*$ is determined as
\begin{align}
    P_m^*= 
  \begin{cases}
      P_{\max}, & |h^a_{m,b}|^2 \le \tau \\
      0, & |h^a_{m,b}|^2 > \tau. \\
   \end{cases}
\end{align} 
\qed

\section{Proof of Lemma \ref{lemma:DEP}} \label{Appendix:DEP}
 The false-alarm probability at Willie is given by
\begin{align*}
  P_{\text FA} & = \Pr(T_w > \gamma ; H_0) \\
    & = \Pr(T_w - \sigma_w^2 > \gamma - \sigma_w^2 ; H_0) \\
    & \mathop{\approx}_{(a)} \Pr(Z > \gamma-\sigma_w^2)
      =Q\left(\frac{(\gamma-\sigma_w^2) -KP_{\max}  }{\sqrt{KP_{\max}^2}}\right),
\end{align*}
where the approximation in (a) is carried out by applying the central limit theorem, i.e., $\Gamma(k,P_i) \approx \mathcal N(kP_i,kP_i^2)$, and $Z$ is a random variable following $N(K P_{\max},K P_{\max}^2)$ for the number of cooperative users, $K$. Meanwhile, the miss-detection probability at Willie can be derived as
\begin{align*}
  P_{MD} & = \Pr(T_w \leq \gamma ; H_1) = \Pr(T_w-\sigma_w^2 \leq \gamma-\sigma_w^2 ;H_1)  \\
  & \mathop{\approx}_{(b)}   \Pr(X+Y \leq \gamma-\sigma_w^2) \\
  & =  \int_{-\infty}^{\infty} f_{Y}(y) \Pr\left(X\leq \gamma-\sigma_w^2-y\right)dy \\
  & = Q\left(\frac{KP_{\max} - (\gamma-\sigma_w^2) }{\sqrt{KP_{\max}^2}}\right)\\
   &\hspace{0.1\columnwidth} -  \exp\left(K\frac{P_{\max}^2+2P_aP_{\max}}{2P_a^2} - \frac{ \gamma-\sigma_w^2}{P_a}\right) \\
    & \hspace{0.1\columnwidth} \times Q\left(\sqrt{K}\frac{P_{\max}+P_a}{P_a} - \frac{1}{P_{\max}\sqrt{K}} (\gamma-\sigma_w^2)\right),
\end{align*}
where the random variables, $X$ and $Y$ follow $\Gamma(1, P_a)$ and $N(KP_{\max},KP_{\max}^2)$, respectively, and the approximation in (b) is carried out by applying the central limit theorem of $\Gamma(K,P_{\max})$. 
\qed


\section{Proof of Lemma \ref{lemma:connection_prob}} \label{Appendix:connection_prob}
Let $h^a_{m_1,b} \leq h^a_{m_2,b} \leq \ldots \leq h^a_{m_M,b}$ denote the order statistics in a sample of $M$ from the standard exponential distribution, i.e., the rate parameter of unity. Then, $h^a_{m_r, b}$ can be expressed by 
\[
h^a_{m_r, b} = \sum_{i=1}^r \frac{y_i}{M-i+1},
\]
where $y_i$'s are independent and identically distributed (i.i.d) standard exponential random variables \cite{Renyi53OntheTheory}. Then, the sum of order statistics, $S_h = \sum_{i=1}^{K_{\min}} h^a_{m_i, b}$ can be expressed as
\begin{equation} \label{eq:sumOfOSD}
  S_h = \sum_{j = 1}^{K_{\min}} \frac{K_{\min} - j + 1}{M - j + 1} y_j  = \sum_{j = 1}^{K_{\min}} \Delta_j y_j,
\end{equation}
where $\Delta_j \triangleq (K_{\min} - j + 1)/(M - j + 1)$, and each term in the summation follows a Gamma distribution, i.e.,
\[
  \Delta_j y_j \sim \Gamma(1, \Delta_j).
\]
Thus, the PDF for the sum of the order statistics in \eqref{eq:sumOfOSD} can be expressed as the convolution of the Gamma distributions as follows:
\[
  S_h \sim \Conv_{j = 1}^{K_{\min}} \Gamma(1, \Delta_j),
\]
which can be approximated as a Gaussian distribution due to the Lyapunov central limit theorem. That is,
\begin{equation} \label{eq:S_h}
  S_h \sim \mathcal{N}(\mu,\Xi^2),
\end{equation}
where $\mu = \sum_{i=1}^{K_{\min}} \Delta_i$ and $\Xi^2 = \sum_{i=1}^{K_{\min}} \Delta_i^2$. Therefore, the connection probability can be derived as
\[
  P_c  = \Pr(S_h \leq T) = Q\left(\frac{\mu-T}{\Xi}\right),
\]
where $T = \frac{1}{r}\frac{P_a|h^a_{a,b}|^2}{P_{\max}} -\frac{\sigma_b^2}{P_{\max}}$. This completes the proof. 
\qed

\section{Proof of Theorem \ref{theorem:Opt_R}} \label{Appendix:Opt_R}

From (\ref{eq:connection_prob}), the throughput, $\eta = R P_c$ can be written as 
  \begin{equation} \label{eq:thrput2}
    \eta = \log_2\left(1+r\right) Q\left(\frac{\mu P_{\max} + \sigma_b^2- \frac{P_a|h^a_{a,b}|^2}{r} }{\Xi P_{\max}}\right).
  \end{equation}
The relation between $K$ and $\epsilon$ in \eqref{eq:Opt_K} tells $K$ quadratically grows with the decreasing $\epsilon$ value. In the large DEP regime, i.e., $\epsilon \ll 1$, there are a large number of cooperative users, which in turn reduces the value of $r$, and thus we can have $\log(1 + r) \approx r$. In such a case, the throughput in \eqref{eq:thrput2} can be expressed as
  \begin{equation}\label{eq:taylor_throughput}
    \eta = \frac{r}{\log 2} Q\left(\frac{\mu P_{\max} + \sigma_b^2 - \frac{P_a|h^a_{a,b}|^2}{r} }{\Xi P_{\max}}\right)= \frac{r}{\log 2}Q\left(\theta - \frac{\kappa}{r}\right),
  \end{equation}
where  $\theta = (\mu P_{\max}+\sigma_b^2)/(P_{\max} \Xi)$, $\kappa = P_a|h^a_{a,b}|^2/(P_{\max}\Xi)$, and $\Xi$ and $\mu$ are defined in \eqref{eq:S_h}. For the throughput in  (\ref{eq:taylor_throughput}), the derivative of throughput is approximated as
\begin{align}\label{eq:deriv_eta}
 & \frac{d}{dr} \left(\frac{r}{\log 2}Q\left(\theta-\frac{\kappa}{r}\right)\right) \nonumber\\
 & = \frac{r}{\log2} \left(-\frac{1}{\sqrt{2\pi}}\exp{-\frac{\left(\theta-\frac{\kappa}{r}\right)^2}{2}}\frac{\kappa}{r^2} \right) + \frac{1}{\log 2}Q\left(\theta - \frac{\kappa}{r}\right) \nonumber\\
 & \approx -\frac{\theta}{\sqrt{2\pi}\log 2}\exp{-\frac{\left(\theta-\frac{\kappa}{r}\right)^2}{2}} + \frac{1}{2 \log 2},
\end{align} 
where using some results in \cite{Decker75Computer}, the second term is approximated as 
\begin{align} \label{eq:erfc_approx}
  & \frac{1}{\log 2}Q\left(\theta - \frac{\kappa}{r}\right) = \frac{1}{2\log 2}{\rm erfc} \left(\frac{\theta - \frac{\kappa}{r}}{\sqrt 2}\right) \nonumber\\ 
  & \approx \frac{1}{2\log 2}\left(1- \frac{2}{\sqrt \pi} \exp{-\frac{\left(\theta - \frac{\kappa}{r}\right)^2}{2}}\frac{\theta - \frac{\kappa}{r}}{\sqrt 2}\right) \nonumber\\ 
  & = \frac{1}{2\log 2} - \frac{1}{\sqrt{2\pi}\log 2}\left(\theta - \frac{\kappa}{r}\right) \exp{-\frac{\left(\theta - \frac{\kappa}{r}\right)^2}{2}}.\nonumber
\end{align}

We need to find the value of $r$ making the derivative of throughput equal to zero, which provides the covert rate, $R$, maximizing the throughput. It is readily shown that the derivative of throughput will be zero when $r$ is a root of the following quadratic equation of $r$:
\[
  \left(\theta^2 + \psi\right)r^2 - 2\theta\kappa r + \kappa^2 = 0, 
\]
where $\psi = 2\log(\sqrt{2\pi}/2\theta)$. The roots of quadratic equation are 
\[
  r^{\pm} = \frac{2\theta\kappa \pm \sqrt{4\theta^2\kappa^2 - 4\kappa^2\left(\theta^2 + \psi\right)}}{2\left(\theta^2 + \psi\right)} = \frac{\kappa}{\theta \mp \sqrt{-\psi}},
\]
where $r^+ > r^-$. Among the two roots, it will be that $r^{-1}$ is the solution that we look for, i.e., $R_{\max} = \log_2(1 + r^-)$.

For $r \in (r^-, r^+)$, we can express $r$ as 
\begin{equation} \label{eq:r}
  r  = \frac{\kappa}{\theta + \sqrt{-\psi} - c},
\end{equation}
where $c \in (0, 2\sqrt{-\psi})$. Then, the derivative of $\eta$ with respect to $r$ can be expressed in terms of $c$ as
\begin{align*}
  \frac{d \eta}{dr} & = \frac{1}{2 \log 2}\left(1 - \frac{2\theta}{\sqrt{2\pi}}\exp{-\frac{\left(- \sqrt{-\psi} + c \right)^2}{2}} \right)\\
  & = \frac{1}{2 \log 2}\left(1 - \frac{2\theta}{\sqrt{2\pi}}\exp{\frac{\psi - c^2 + 2c\sqrt{-\psi}}{2}}\right)\\
  & = \frac{1}{2 \log 2}\left(1 - \frac{2\theta}{\sqrt{2\pi}}\exp{\frac{\psi}{2}}\exp{\frac{c(2\sqrt{-\psi}-c)}{2}}\right)\\
  & \mathop{=}_{(a)} \frac{1}{2\log{2}}\left(1-\exp{\frac{c(2\sqrt{-\psi}-c)}{2}}\right) < 0,
\end{align*}
where the step (a) is due to the fact that $\exp(\psi/2) = (2\theta/\sqrt{2\pi})^{-1}$. It is obvious that the derivative of $\eta$ is negative for $r \in (r^-, r^+)$, which implies $\eta(r^-) > \eta(r^+)$. Thus, we have $r^-$ as the solution with which the maximum covert rate, $R_{\max}$ is obtained as $R_{\max} = \log_2 (1+ r^-) = \log_2 (1+\kappa/(\theta + \sqrt{-\psi}))$. \qed

\section{Proof of Cross Point} \label{Appendix:CrossPoint}
For finding the cross point of covert constraint, we first simplify the maximum throughput in Corollary \ref{corollary:Max_Throughput} as follows:
\begin{align*}
  & \eta_{\max} =\log_2\left(1+\frac{\kappa}{\theta + \sqrt{-\psi}}\right) Q(-\sqrt{-\psi}) \\
  & \mathop{\approx}_{(a)} \frac{\kappa}{\theta + \sqrt{-\psi}}\frac{1}{\log 2}Q(-\sqrt{-\psi}) \\
  & = \frac{1}{\log 2}\frac{\kappa}{\theta + \sqrt{-\psi}}\left(\frac{1}{2} {\rm erfc}\left(-\sqrt{-\log \frac{\sqrt{2\pi}}{2\theta}}\right)\right)\\
  &  \mathop{\approx}_{(b)} \frac{1}{\log 2}\frac{\kappa}{\theta + \sqrt{-\psi}} \\
  & ~~~~~~~~~~~~ \times\left(\frac{1}{2}\left(1+\frac{2}{\sqrt{\pi}}\sqrt{\log \frac{2\theta}{\sqrt{2\pi}}} \exp{-\log \frac{2\theta}{2\pi}}\right)\right) \\
  & =  \frac{1}{\log 2}\frac{\kappa}{\theta + \sqrt{-\psi}}\left(\frac{1}{2} \left(1+\frac{\sqrt{2}}{\theta}\sqrt{\log \frac{2\theta}{\sqrt{2\pi}}}\right) \right)\\
  & =  \frac{1}{\log 2}\frac{\kappa}{\theta + \sqrt{-\psi}} \left(\frac{1}{2\theta}\left(\theta + \sqrt{-\psi}\right)\right) = \frac{\kappa}{2\theta\log 2},
\end{align*}
where (a) comes from the first order Taylor approximation since $r_{\max} = \kappa/(\theta + \sqrt{-\psi}) \ll 1$, and (b) is due to the approximation of the complementary error function in \cite{Decker75Computer,Ren2007Closed}. Then, the maximum throughput can be succinctly expressed as
\begin{equation} \label{eq:simplified_throughput}
  \eta_{\max} = \frac{1}{2\log 2} \frac{P_a|h^a_{a,b}|^2}{P_{\max}\mu +\sigma_b^2},
\end{equation}
where  $\theta = (\mu P_{\max}+\sigma_b^2)/(P_{\max} \Xi)$, $\kappa = P_a|h^a_{a,b}|^2/(P_{\max}\Xi)$, and $\mu$ is defined in \eqref{eq:S_h}. 

In (\ref{eq:simplified_throughput}), the term, $\mu = \sum_{i=1}^{K_{\min}} \frac{K_{\min}+1-i}{M+1-i}$ can be re-expressed as
\begin{align*}
  \mu & = \sum_{i=1}^{K_{\min}} \frac{K_{\min}+1-i}{M+1-i} \\
      & = (K_{\min}-M)(H_M - H_{M-K_{\min}}) + K_{\min},
\end{align*}
where $H_n$ is the harmonic number defined as $H_n=\sum_{i=1}^n \frac{1}{i} = \psi_0(n) + 1/n +\gamma_0$ where $\gamma_0$ and $\psi_0$ are the Euler-Mascheroni constant and the digamma function. Then, we can approximate $\mu$ as 
\begin{align}
  \mu & = (K_{\min}-M)(H_M - H_{M-K_{\min}}) + K_{\min} \nonumber\\
  & = (K_{\min}-M)\left(\psi_0(M) - \psi_0(M-K_{\min}) \right) + K_{\min}\frac{M + 1}{M} \nonumber \\ 
  & \mathop{\approx}_{(c)} (M-K_{\min})\log \frac{M-K_{\min}}{M} + K_{\min}\frac{2M + 1}{2M} \nonumber \\ 
  & = M\log\left(\frac{M-K_{\min}}{M}\right)^{\frac{M-K_{\min}}{M}} + K_{\min}\frac{2M + 1}{2M} \nonumber \\
  & \mathop{\approx}_{(d)} M \left(\frac{M-K_{\min}}{M} - 1 \right) \sqrt{1-\frac{K_{\min}}{M}} + K_{\min}\frac{2M + 1}{2M} \nonumber \\ 
   & \mathop{\approx}_{(e)} -K_{\min}\left(1-\frac{K_{\min}}{2M}\right) + K_{\min}\frac{2M + 1}{2M} \nonumber \\ 
   &  = \frac{K_{\min}(1 + K_{\min})}{2M}. \label{eq:muApprox}
\end{align}
where (c) is due to $\psi_0(x) \approx \log x - \frac{1}{2x}$ \cite{Abramowitz64Handbook}, (d) is possible with $\log x^x \approx (x-1)\sqrt{x}$ for $x \approxeq 1$, and (e) is the first order Taylor approximation for $\sqrt{1-x}$, i.e., $\sqrt{1-x} \approx 1-\frac{x}{2}$. The optimization of the throughput in \eqref{eq:Opt_Pa} is now reformulated with \eqref{eq:simplified_throughput} and \eqref{eq:muApprox} as follows:
\begin{align*}
  \max_{P_a} & \frac{1}{\log 2}\frac{P_a|h^a_{a,b}|^2}{P_{\max} K_{\min}(1 + K_{\min})/(2M) + \sigma_b^2}, \\
  & s.t.~~ P_a \leq P_{\max},
\end{align*}
which is equivalent to 
\begin{align*}
  \min_{P_a} & \frac{1}{P_a|h^a_{a,b}|^2} \left(\frac{P_{\max}K_{\min}(1 + K_{\min})}{2M} + \sigma_b^2\right), \\
  & s.t. ~~ P_a \leq P_{\max}.
\end{align*}
The objective function is expressed in terms of $P_a$ with $K_{\min} = P_a^2 c_{\epsilon}/P_{\max}^2$ where the ceil function in \eqref{eq:Opt_K} is omitted. Then, the objective function becomes a cubic function of $P_a$ as follows:
\[
 aP_a^3 + bP_a + \frac{c}{P_a},
\]
where $a = \frac{c_{\epsilon}^2}{2P_{\max}^3|h^a_{a,b}|^2 M}$, $b = \frac{c_{\epsilon}}{2P_{\max}|h^a_{a,b}|^2 M}$, and $c = \frac{\sigma_b^2}{|h^a_{a,b}|^2}$. It can be readily shown that the roots of the derivative of the objective function are obtained by finding the solutions of the equality, $3aP_a^4 + bP_a^2 - c = 0$. There are four solutions, 
\[
  P_a = \pm \sqrt{\frac{\pm \sqrt{12ac+b^2}-b}{6a}}
\]
among which we have only one valid solution,
\begin{align}\label{eq:approximated_Pa1}
  P_a^* &= \sqrt{\frac{\sqrt{12ac+b^2}-b}{6a}} = P_{\max}\sqrt{\frac{\sqrt{\frac{24M\sigma_b^2}{P_{\max}} + 1 } - 1} {6c_\epsilon}} \\
  & = P_{\max} \sqrt{\frac{\Upsilon}{c_\epsilon}}, \label{eq:approximated_Pa2}
\end{align}
where $\Upsilon = (\sqrt{24M\sigma_b^2/P_{\max} + 1} - 1)/6$ since the transmit power $P_a$ is a non-negative real number. Thus, the search for $P^\ast_a$ has a unique solution.
Note that $P_a \le P_{\max}$. Thus, the optimal $P_a^*$ can be expressed by 
\begin{align}\label{eq:approximated_optimal_Pa}
   P_a^* = 
   \begin{cases}
      P_{\max} \sqrt{\frac{\Upsilon}{c_\epsilon}}, & \Upsilon \leq c_\epsilon \\
     P_{\max}, & {\rm otherwise}
   \end{cases}.
 \end{align} 
From \eqref{eq:approximated_optimal_Pa}, the optimal $K^*$ is given by
\begin{align}\label{eq:approximated_K}
    K^*_{\min} = 
   \begin{cases}
      \lceil \Upsilon \rceil, & \Upsilon \leq c_\epsilon \\
     \lceil c_\epsilon \rceil, & {\rm otherwise}
   \end{cases}.
\end{align}
Remember that as the covert constraint $1 - \epsilon$ decreases from unity, the optimal transmit power $P^\ast_a$ grows until it reaches the maximum $P_{\max}$. In the large DEP regime, i.e., $\epsilon \ll 1$, $c_\epsilon \approx \epsilon^{-2}$, which tells that $P^\ast_a$ linearly grows with the increasing $\epsilon$ value. Now, we get the cross point of the covert constraint below which the transmit power remains at the maximum power by finding the $\epsilon$ value satisfying the following equality:
\[
  \sqrt{\frac{\sqrt{\frac{24M\sigma_b^2}{P_{\max}} + 1 } - 1} {6c_\epsilon}} = 1.
\]
The solution of the equality is readily obtained as
\[
  \epsilon = \frac{-\rho + \sqrt{\rho^2 + 16}}{8},
\]
and thus we have the cross point of the covert constraint as
\begin{equation}\label{eq:cross_point}
  \xi = 1 - \frac{-\rho + \sqrt{\rho^2 + 16}}{8},
\end{equation}
where $\rho = \sqrt{2\pi \Upsilon}$.
\qed

\bibliographystyle{IEEEtran}
\bibliography{ref_covert}
\end{document}